\newtheorem{theorem}{Theorem}
\newtheorem{proposition}{Proposition}
\providecommand{\algorithmname}{Algorithm}
\algnewcommand\algorithmicinput{\textbf{Input}:}
\algnewcommand\algorithmicoutput{\textbf{Output}:}
\algnewcommand\INPUT{\item[\algorithmicinput]}
\algnewcommand\OUTPUT{\item[\algorithmicoutput]}
\newcolumntype{L}[1]{>{\raggedright\let\newline\\\arraybackslash\hspace{0pt}}m{#1}}
\newcolumntype{C}[1]{>{\centering\let\newline\\\arraybackslash\hspace{0pt}}m{#1}}
\newcolumntype{R}[1]{>{\raggedleft\let\newline\\\arraybackslash\hspace{0pt}}m{#1}}
\newcommand*{\affaddr}[1]{#1} 
\newcommand*{\affmark}[1][*]{\textsuperscript{#1}}
\global\long\def\bX{\mathbf{X}}
\global\long\def\bx{\mathbf{x}}
\global\long\def\bY{\mathbf{Y}}
\global\long\def\by{\mathbf{y}}
\global\long\def\bA{\mathbf{A}}
\global\long\def\bH{\mathbf{H}}
\global\long\def\bw{\mathbf{w}}
\global\long\def\bQ{\mathbf{Q}}
\global\long\def\bS{\mathbf{S}}
\global\long\def\bbeta{\boldsymbol{\beta}}
\global\long\def\bSigma{\boldsymbol{\Sigma}}
\global\long\def\bgamma{\boldsymbol{\gamma}}
\global\long\def\btheta{\boldsymbol{\theta}}
\global\long\def\bLambda{\boldsymbol{\Lambda}}
\global\long\def\bDelta{\boldsymbol{\Delta}}
\global\long\def\bPi{\boldsymbol{\Pi}}
\global\long\def\bGamma{\boldsymbol{\Gamma}}
\global\long\def\bTheta{\boldsymbol{\Theta}}
\global\long\def\bUpsilon{\boldsymbol{\Upsilon}}
\global\long\def\bOmega{\boldsymbol{\Omega}}
\global\long\def\bpi{\boldsymbol{\pi}}
\global\long\def\bupsilon{\boldsymbol{\upsilon}}
\global\long\def\bzeta{\boldsymbol{\zeta}}
\global\long\def\bkappa{\boldsymbol{\kappa}}
\newcommand*{\addFileDependency}[1]{
  \typeout{(#1)}
  \@addtofilelist{#1}
  \IfFileExists{#1}{}{\typeout{No file #1.}}
}
\newcommand*{\myexternaldocument}[1]{%
    \externaldocument{#1}%
    \addFileDependency{#1.tex}%
    \addFileDependency{#1.aux}%
}
\title{Covariance-on-Covariance Regression}
\author{%
    Yi Zhao\affmark[1] and Yize Zhao\affmark[2] \\
    \affaddr{\affmark[1]Department of Biostatistics and Health Data Science, Indiana University School of Medicine} \\
    \affaddr{\affmark[2]Department of Biostatistics, Yale University School of Medicine} \\
}
\date{}
\providecommand{\keywords}[1]
{
  {\small   
  \textbf{Keywords:} #1 }
}
\begin{document}

\maketitle

\thispagestyle{empty}

\begin{abstract}
    A Covariance-on-Covariance regression model is introduced in this manuscript. It is assumed that there exists (at least) a pair of linear projections on outcome covariance matrices and predictor covariance matrices such that a log-linear model links the variances in the projection spaces, as well as additional covariates of interest. An ordinary least square type of estimator is proposed to simultaneously identify the projections and estimate model coefficients. Under regularity conditions, the proposed estimator is asymptotically consistent. The superior performance of the proposed approach over existing methods are demonstrated via simulation studies. Applying to data collected in the Human Connectome Project Aging study, the proposed approach identifies three pairs of brain networks, where functional connectivity within the resting-state network predicts functional connectivity within the corresponding task-state network. The three networks correspond to a global signal network, a task-related network, and a task-unrelated network. The findings are consistent with existing knowledge about brain function.
\end{abstract}

\keywords{Common diagonalization; Generalized linear model; Linear projection; Ordinary least squares}


\clearpage
\setcounter{page}{1}

\section{Introduction}
\label{sec:intro}

In this manuscript, a Covariance-on-Covariance regression problem is studied. This is motivated by functional magnetic resonance imaging (fMRI) experiments. Typically, there are two types of fMRI experiments, resting-state fMRI (rs-fMRI) and task-based fMRI (tb-fMRI). In a rs-fMRI experiment, the participants are asked to lie in the scanner at rest with eyes open. The study interest is to characterize the coactivation pattern between brain regions, the so-called brain functional connectivity,  captured by the correlation, or the covariance after properly standardizing the data, between fMRI time courses. In a tb-fMRI experiment, the participants are instructed to complete a certain task during the imaging scan. Examples include finger tapping task, go/no-go task, $n$-back working memory task, and so on. Such type of fMRI experiments attempts to identify activated brain regions in response to the task stimuli and depict task-state functional networks. Building upon the assumption that individual variation in brain responses to a task is intrinsic in the brain and predictable, existing research demonstrates that one can use task-free measurements to predict task-related activations~\citep{tavor2016task,lacosse2021jumping,cohen2020regression,ngo2022predicting}. An on-going research topic is to find the model that yields the optimal performance in predicting the task activation map using the rs-features. 
In tb-fMRI experiments, studying the task-state functional connectivity is also important. When performing cognitive tasks, small alterations from the intrinsic network organization occur and strongly contribute to task performance~\citep{cole2021functional}, where the intrinsic network organization can be captured by the functional connectivity map using rs-fMRI. Considering the fact that cognitive task activations emerge through network interactions, predicting task-state functional connectivity from resting-state functional connectivity is feasible and promising.
Putting it into a statistical framework, it is a regression problem with a covariance matrix (tb-connectivity) as the outcome and a covariance matrix (rs-connectivity) as one of the predictors.

Popular approaches of using the rs-connectivity features to predict tb-connectivity include the general linear model and statistical/machine learning techniques enriching the collection of literature on ``connectome fingerprinting''~\citep{mennes2010inter,tavor2016task}. After choosing a brain parcellation, rs-connectivity and tb-connectivity are calculated for each pair of brain regions. One straightforward approach is to fit models to each pair of tb-connectivity and loop over all possible pairs, where in each model, all pairs of rs-connectivity can be entered as the predictors. This approach, however, ignores the structure of the connectivity matrices (both tb-connectivity and rs-connectivity), such as the positive definiteness and topological architecture, and suffers from a deficient power due to multiplicity. Considering covariance matrices as the outcome, to preserve the structural property and directly characterize covariance matrices with covariates of interest, a type of regression model, called covariance regression, was introduced. Examples include modeling the covariance matrix as a quadratic function of the covariates~\citep{hoff2012covariance,seiler2017multivariate} or a linear combination of similarity matrices of the covariates~\citep{zou2017covariance}, nonparametric covariance regression utilizing low-rank approximation~\citep{fox2015bayesian}, common diagonalization based on eigendecomposition~\citep{flury1984common,boik2002spectral,hoff2009hierarchical,franks2019shared}  and Cholesky decomposition~\citep{pourahmadi2007simultaneous}, and a recent approach of covariate assisted principal regression for covariance matrix outcomes~\citep{zhao2021covariate}.
In \citet{zhao2021covariate}, it is assumed that there exists a common diagonalization on the covariance matrices and the corresponding diagonal elements satisfy a log-linear model on the covariates of interest. The advantage is to preserve the positive definiteness of the covariance matrices and offer high flexibility in parsimonious modeling. 
In this study, building upon this idea of common diagonalization, an approach to perform \textbf{C}ovariance-\textbf{o}n-\textbf{C}ovariance \textbf{Reg}ression (CoCReg) is introduced. Common linear projections are assumed for both outcome covariance matrices and predictor covariance matrices and a log-linear regression model is assumed for the projected data and the rest scalar covariates of interest. The objective is to identify the projections and simultaneously estimate the model coefficients. With proper thresholding or sparsifying on the loading profiles, the model offers a network-level interpretation, that is the resting-state functional connectivity within the network can predict the connectivity between regions within the corresponding task-based network. For tb-fMRI signals, normality cannot be assumed. Thus, likelihood-based estimators cannot be employed. Instead, an ordinary least squares type of estimator is proposed and asymptotic consistency can be achieved. 

Another group of related work is the image-on-image regression. Here, an image means a vector of scalar outcomes with spatial information. Compared with image-on-scalar and scalar-on-image regression, image-on-image has been less explored but received growing attention along the need in modern neuroscientific research. Recently, \citet{guo2022spatial} developed a spatial Bayesian latent factor model to predict individual task-evoked images using the corresponding task-independent images. Data dimension was significantly reduced using proper basis functions and choosing a small number of latent factors, and at the same time taking spatial dependence into consideration. Some earlier attempts include \citet{sweeney2013automatic}, which considered voxel-wise regression models for imaging prediction, however, ignored the spatial correlations. Later, \citet{hazra2019spatio} also proposed to perform voxel-wise models but to include effects from the neighboring voxels. 
In this paper, the proposed framework can be viewed as an image network-on-image network regression, which is sharply distinguished from all the existing works.

The rest of this manuscript is organized as the following. Section~\ref{sec:model} introduces the proposed Covariance-on-Covariance regression model based on linear projections. A least-square type of estimator is introduced to simultaneously estimate the linear projections and model coefficients. The asymptotic consistency of the proposed estimator is discussed under regularity conditions. The performance of the proposed approach is evaluated via simulation studies in Section~\ref{sec:sim} and the task/resting-state fMRI data collected in the Lifespan Human Connectome Project Aging (HCP-A) Study in Section~\ref{sec:fmri}. Section~\ref{sec:discussion} summarizes the manuscript with discussions.

\section{Model and Method}
\label{sec:model}

Assume data are collected from $n$ subjects. Let $\bx_{is}\in\mathbb{R}^{p}$ denote the $p$-dimensional predictor of the $s$th observation from subject $i$, where $s=1,\dots,u_{i}$ and $u_{i}$ is the number of observations. Let $\by_{it}\in\mathbb{R}^{q}$ denote the $q$-dimensional outcome of the $t$th observation from subject $i$, where $t=1,\dots,v_{i}$ and $v_{i}$ is the number of observations. Denote $\bDelta_{i}\in\mathbb{R}^{p\times p}$ as the covariance matrix of $\bx_{is}$ ($s=1,\dots,u_{i}$) and $\bSigma_{i}\in\mathbb{R}^{q\times q}$ as the covariance matrix of $\by_{it}$ ($t=1,\dots,v_{i}$), for $i=1,\dots,n$. The objective is to generalize the concept of regression to study the association between the two sets of covariance matrices. 
In the data application, $\bx_{is}$'s are the rs-fMRI signals and $\by_{it}$'s are the tb-fMRI signals. Assuming the signals are centralized to mean zero and standardized to unit variance, the covariance matrices represent the resting-state functional connectivity ($\bDelta_{i}$'s) and brain connectivity in task ($\bSigma_{i}$'s), respectively. The study interest is to investigate if the resting-state functional connectivity can predict brain connectivity under an on-going in-scanner task.
Let $\bw_{i}\in\mathbb{R}^{r}$ denote the $r$-dimensional other covariates of interest (with the first element of one for the intercept term). It is assumed that there exists a linear projection on the $\bx_{is}$'s, denoted as $\btheta\in\mathbb{R}^{p}$, and a liner projection on the $\by_{it}$'s, denoted as $\bgamma\in\mathbb{R}^{q}$, such that the following regression model holds.
\begin{equation}\label{eq:model}
    \log\left(\bgamma^\top\bSigma_{i}\bgamma\right)=\alpha\log\left(\btheta^\top\bDelta_{i}\btheta\right)+\bw_{i}^\top\bbeta,
\end{equation}
where $\alpha\in\mathbb{R}$ and $\bbeta\in\mathbb{R}^{r}$ are model coefficients. Denote $\zeta_{it}=\bgamma^\top\by_{it}$. $\bgamma$ projects $\by_{it}$ into $\mathbb{R}$ and $\mathrm{Var}(\zeta_{it})=\bgamma^\top\bSigma_{i}\bgamma$. Analogously, denote $\kappa_{is}=\btheta^\top\bx_{is}$ and $\mathrm{Var}(\kappa_{is})=\btheta^\top\bDelta_{i}\btheta$. Model~\eqref{eq:model} is thus a type of generalized regression model with a logarithmic link on variance components in the projected spaces and to study the association between the two variances. When $\bw_{i}$ contains the intercept term only ($r=1$), the model can be written as
\begin{equation}
    \log\left(\bgamma^\top\bSigma_{i}\bgamma\right)=\beta_{0}+\alpha\log\left(\btheta^\top\bDelta_{i}\btheta\right).
\end{equation}
This can be viewed as a generalization of the canonical correlation analysis (CCA), but to characterize the association between the second-order moments, that is the covariance matrices of $\by$ and $\bx$.
When $\btheta$ is a prespecified projection vector, for example, a subgroup of brain regions with equal weight, plugging in an estimate of $\bDelta_{i}$ (denoted as $\hat{\bDelta}_{i}$) and treating $\log(\btheta^\top\hat{\bDelta}_{i}\btheta)$ as a covariate, Model~\eqref{eq:model} reduces to the covariate assisted principal regression model proposed in \citet{zhao2021covariate}. However, since distributional assumptions on $\by_{it}$ are not imposed, instead of a likelihood-based estimator introduced by \citet{zhao2021covariate}, a distribution-free estimator is introduced in the next section.
When $\bgamma$ is prespecified, Model~\eqref{eq:model} should be distinguished from the principal component regression as the study interest is on the association to the variation of the components, that is, $\mathrm{Var}(\btheta^\top\bx_{is})$, rather than the principal components themselves.

\subsection{Estimation}
\label{sub:estimation}

In this section, an ordinary least square (OLS) type of estimator is introduced to relax the distributional assumptions on the data. For resting-state fMRI data, one can assume that the signal ($\bx_{is}$) follows a normal distribution with covariance matrix $\bDelta_{i}$. While for task-based fMRI data ($\by_{it}$), this normality assumption does not hold as the signals are a convolution of the task onsite and the hemodynamic response function (HRF), where a widely accepted theoretical distribution for the HRF is a gamma distribution~\citep{lindquist2008statistical}. Thus, the solution to the following optimization problem is introduced as the estimator without imposing any distributional assumption.
\begin{eqnarray}\label{eq:obj}
    \underset{(\bgamma,\btheta,\alpha,\bbeta)}{\text{minimize}} && \ell=\frac{1}{n}\sum_{i=1}^{n}\left\{\log\left(\bgamma^\top\hat{\bSigma}_{i}\bgamma\right)-\alpha\log\left(\btheta^\top\hat{\bDelta}_{i}\btheta\right)-\bw_{i}^\top\bbeta\right\}^{2}, \nonumber \\
    \text{such that} && \bgamma^\top\bH_{y}\bgamma=1, \quad \btheta^\top\bH_{x}\btheta=1.
\end{eqnarray} 
$\hat{\bSigma}_{i}$ and $\hat{\bDelta}_{i}$ are estimators of $\bSigma_{i}$ and $\bDelta_{i}$, respectively. 
A natural choice is the sample covariance matrices, $\hat{\bSigma}_{i}=\bS_{i}^{y}=v_{i}^{-1}\sum_{t=1}^{v_{i}}(\by_{it}-\bar{\by}_{i})(\by_{it}-\bar{\by}_{i})^\top$ and $\hat{\bDelta}_{i}=\bS_{i}^{x}=u_{i}^{-1}\sum_{s=1}^{u_{i}}(\bx_{is}-\bar{\bx}_{i})(\bx_{is}-\bar{\bx}_{i})^\top$ (where $\bar{\by}_{i}=v_{i}^{-1}\sum_{t=1}^{v_{i}}\by_{it}$ and $\bar{\bx}_{i}=u_{i}^{-1}\sum_{s=1}^{u_{i}}\bx_{is}$), when the data dimensions are not too large and both $\bS_{i}^{y}$ and $\bS_{i}^{x}$ are positive definite. 
In the optimization problem~\eqref{eq:obj}, constraints are imposed on $\bgamma$ and $\btheta$ as the solutions are zero vectors otherwise, where $\bH_{y}\in\mathbb{R}^{q\times q}$ and $\bH_{x}\in\mathbb{R}^{p\times p}$ are both positive definite. Examples of such matrices include the identity matrix and the overall sample covariance matrices, $\bH_{y}=\bar{\bS}_{y}=\sum_{i=1}^{n}v_{i}\bS_{i}^{y}/\sum_{i=1}^{n}v_{i}$ and $\bH_{x}=\bar{\bS}_{x}=\sum_{i=1}^{n}u_{i}\bS_{i}^{x}/\sum_{i=1}^{n}u_{i}$. 
For a likelihood-based estimator, $\bH_{y}=\bar{\bS}_{y}$ and $\bH_{x}=\bar{\bS}_{x}$ are considered to incorporate sample information and avoid undesired component estimate~\cite[see][for a discussion]{zhao2021covariate}. For the proposed OLS estimator (or moment estimator), identity matrices, that is $\bH_{y}=\boldsymbol{\mathrm{I}}_{q}$ and $\bH_{x}=\boldsymbol{\mathrm{I}}_{p}$, are considered to relax the distributional constraints, analogous to the CCA.

Algorithm~\ref{alg:obj_solve} summarizes the estimation procedure of optimizing~\eqref{eq:obj}. One can show that the objective function in~\eqref{eq:obj} is bi-convex over $(\bgamma,\btheta,\alpha,\bbeta)$. Thus, a coordinate-descent algorithm is considered.
For model coefficients $\alpha$ and $\bbeta$, given the rest parameters, the updates are provided in the following.
\begin{equation}\label{eq:alpha_update}
    \hat{\alpha}=\left\{\frac{1}{n}\sum_{i=1}^{n}\log^{2}(\btheta^\top\hat{\bDelta}_{i}\btheta)\right\}^{-1}\left[\frac{1}{n}\sum_{i=1}^{n}\left\{\log(\bgamma^\top\hat{\bSigma}_{i}\bgamma)-\bw_{i}^\top\bbeta\right\}\log(\btheta^\top\hat{\bDelta}_{i}\btheta)\right],
\end{equation}
\begin{equation}\label{eq:beta_update}
    \hat{\bbeta}=\left(\frac{1}{n}\sum_{i=1}^{n}\bw_{i}\bw_{i}^\top\right)^{-1}\left[\frac{1}{n}\sum_{i=1}^{n}\left\{\log(\bgamma^\top\hat{\bSigma}_{i}\bgamma)-\alpha\log(\btheta^\top\hat{\bDelta}_{i}\btheta)\right\}\bw_{i}\right].
\end{equation}
For $\bgamma$ and $\btheta$, with quadratic constraints, the method of Lagrange multiplier is employed. Using $\bgamma$ as an example, let $U_{i}=\alpha\log(\btheta^\top\hat{\bDelta}_{i}\btheta)+\bw_{i}^\top\bbeta$, the Lagrangian form is
\begin{equation}\label{eq:gamma_Lag}
    \mathcal{L}(\bgamma)=\frac{1}{n}\sum_{i=1}^{n}\left\{\log(\bgamma^\top\hat{\bSigma}_{i}\bgamma)-U_{i}\right\}^{2}-\lambda_{1}(\bgamma^\top\bH_{y}\bgamma-1),
\end{equation}
where $\lambda_{1}$ is the Lagrange multiplier. Details of optimizing~\eqref{eq:gamma_Lag}, as well as optimizing for $\btheta$, are provided in Section~\ref{appendix:sub:obj_solve} of the supplementary materials.

\begin{algorithm}
    \caption{\label{alg:obj_solve}An algorithm of solving~\eqref{eq:obj}.}
    \begin{algorithmic}[1]
        \INPUT $\{(\by_{i1},\dots,\by_{iv_{i}}),(\bx_{i1},\dots,\bx_{iu_{i}}),\bw_{i}\}$

        \State For $i=1,\dots,n$, estimate $\bSigma_{i}$ and $\bDelta_{i}$, denoted as $\hat{\bSigma}_{i}$ and $\hat{\bDelta}_{i}$, respectively.

        \State \textbf{Initialization}: $(\bgamma^{(0)},\btheta^{(0)},\alpha^{(0)},\bbeta^{(0)})$

        \Repeat \; for iteration $h=0,1,2,\dots$,

            \State \; update $\alpha$ as $\alpha^{(h+1)}$ using~\eqref{eq:alpha_update} with $\left(\bgamma^{(h)},\btheta^{(h)},\bbeta^{(h)}\right)$;

            \State \; update $\bbeta$ as $\bbeta^{(h+1)}$ using~\eqref{eq:beta_update} with $\left(\bgamma^{(h)},\btheta^{(h)},\alpha^{(h+1)}\right)$;

            \State \; update $\btheta$ as $\btheta^{(h+1)}$ following the approach described in Section~\ref{appendix:sub:obj_solve} of the supplementary materials with $\left(\bgamma^{(h)},\alpha^{(h+1)},\bbeta^{(h+1)}\right)$;

            \State \; update $\bgamma$ as $\bgamma^{(h+1)}$ following the approach described in Section~\ref{appendix:sub:obj_solve} of the supplementary materials with $\left(\btheta^{(h+1)},\alpha^{(h+1)},\bbeta^{(h+1)}\right)$;

        \Until{the objective function in~\eqref{eq:obj} converges.}

        \State Consider a random series of initializations, repeat Steps 2--8, and choose the solution with the minimum objective value.

        \OUTPUT $(\hat{\bgamma},\hat{\btheta},\hat{\alpha},\hat{\bbeta})$
    \end{algorithmic}
\end{algorithm}

\subsection{Inference}
\label{sub:inference}

In this section, we focus on introducing a bootstrap procedure to perform inference on $\alpha$ and $\bbeta$, the model coefficients. 
Bootstrap inference on $\bgamma$ and $\btheta$ is beyond the scope of the current study as it may require a procedure of matching on the projections across bootstrap samples. In addition, the performance of this matching procedure highly depends the chosen similarity metric. Thus, a more rigorous and thorough theoretical and numerical investigation is left for future research.
The following is a procedure to conduct bootstrap inference on $\alpha$ and $\bbeta$.
\begin{description}
    \item[Step 0.] Obtain an estimate of $(\bgamma,\btheta)$, denoted as $(\hat{\bgamma},\hat{\btheta})$, using the full dataset.
    \item[Step 1.] Generate a bootstrap sample, $\{\{\by_{it}\}_{t}^{*},\{\bx_{is}\}_{s}^{*},\bw_{i}^{*}\}$, of size $n$ by sampling with replacement.
    \item[Step 2.] Estimate $(\alpha,\bbeta)$ using Algorithm~\ref{alg:obj_solve}  with $(\hat{\bgamma},\hat{\btheta})$ known.
    \item[Step 3.] Repeat Steps 1--2 for $B$ times.
    \item[Step 4.] Construct bootstrap confidence intervals for $(\alpha,\bbeta)$ under a prespecified significance level. 
\end{description}
In Step 1, the resampling procedure is conducted at the subject level. All observations within a subject will be used for estimation if being sampled. For datasets with a hierarchically nested structure, resampling on the highest level has been demonstrated to better preserve the original sample information and yield better performance~\citep{ren2010nonparametric}.

\subsection{Asymptotic properties}
\label{sub:asmp}

This section discusses the asymptotic properties of the proposed estimator under regularity conditions.
For $i=1,\dots,n$, it is assumed that $\bSigma_{i}$ has the eigendecomposition of $\bSigma_{i}=\bPi_{i}\bLambda_{i}\bPi_{i}^\top$ and $\bDelta_{i}$ has the eigendecomposition of $\bDelta_{i}=\bUpsilon_{i}\bOmega_{i}\bUpsilon_{i}^\top$, where $\bPi_{i}=(\bpi_{i1},\dots,\bpi_{iq})\in\mathbb{R}^{q\times q}$ and $\bUpsilon=(\bupsilon_{i1},\dots,\bupsilon_{ip})\in\mathbb{R}^{p\times p}$ are orthonormal eigenmatrices, $\bLambda_{i}=\mathrm{diag}\{\lambda_{i1},\dots,\lambda_{iq}\}\in\mathbb{R}^{q\times q}$ and $\bOmega=\mathrm{diag}\{\omega_{i1},\dots,\omega_{ip}\}\in\mathbb{R}^{p\times p}$ are diagonal matrices of corresponding eigenvalues. Let $\bzeta_{it}=\bPi_{i}^\top\by_{it}=(\zeta_{itk})_{k}\in\mathbb{R}^{q}$ and $\bkappa_{is}=\bUpsilon_{i}^\top\bx_{is}=(\kappa_{isj})_{j}\in\mathbb{R}^{p}$. Then, $\mathrm{Cov}(\bzeta_{it})=\bLambda_{i}$ and $\mathrm{Cov}(\bkappa_{is})=\bOmega_{i}$. The elements in $\bzeta_{it}$ are uncorrelated and so as the elements in $\bkappa_{is}$. The following assumptions are imposed.

\begin{description}
    \item[Assumption A1] Let $u=\min_{i}u_{i}$ and $v=\min_{i}v_{i}$. $p\ll u$ and $q\ll v$ are fixed.
    \item[Assumption A2] There exist constants $C_{1}$ independent of $u$ and $C_{2}$ independent of $v$, such that for $\forall~j=1,\dots,p$, $\mathbb{E}(\kappa_{i1j}^{4})\leq C_{1}$, and for $\forall~k=1,\dots,q$, $\mathbb{E}(\zeta_{i1k}^{4})\leq C_{2}$, for $\forall~i=1,\dots,n$.
    \item[Assumption A3] $\bSigma_{i}$'s share the same set of eigenvectors, i.e., $\bPi_{i}=\bPi$, for $i=1,\dots,n$; and $\bDelta_{i}$'s share the same set of eigenvectors, i.e., $\bUpsilon_{i}=\bUpsilon$, for $i=1,\dots,n$.
    \item[Assumption A4] For $\forall~i=1,\dots,n$, there exists (at least) a pair of columns in $\bPi_{i}$ and $\bUpsilon_{i}$, indexed by $k_{i}$ and $j_{i})$, respectively, such that $\bgamma=\bpi_{ik_{i}}$, $\btheta=\bupsilon_{ij_{i}}$, and Model~\eqref{eq:model} is satisfied.
\end{description}

Assumption A1 assumes a low-dimensional scenario for both $\bx$ and $\by$. Under this assumption, the sample covariance matrices are well-conditioned and are consistent estimators of the covariance matrices. Though distributional assumptions are not imposed, Assumption A2 regulates the higher-order moments on the transformed random variables, $\bkappa$ and $\bzeta$, which is equivalent to regulating the higher-order moments on $\bx$ and $\by$, respectively. Assumption A3 assumes the common diagonalization of $\bSigma_{i}$'s and $\bDelta_{i}$'s. Assumption A4 assumes the log-linear regression model to be correctly specified. With Assumptions A1--A4 satisfied, one can consider setting the eigenvectors of $\bar{\bS}_{y}$ and $\bar{\bS}_{x}$ as the initial value of $\bgamma$ and $\btheta$, respectively, in Algorithm~\ref{alg:obj_solve}. The following proposition suggests the consistency of the proposed estimator under such initialization.

\begin{proposition}\label{prop:asmp}
    Assume Assumptions A1--A4 hold. As $n,u,v\rightarrow\infty$, the estimator of $(\bgamma,\btheta,\alpha,\bbeta)$ obtained by Algorithm~\ref{alg:obj_solve} is asymptotically consistent.
\end{proposition}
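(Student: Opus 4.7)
The plan is to establish consistency in three stages: (i) the sample covariance inputs are consistent, (ii) the empirical criterion $\ell_n$ converges uniformly to a population criterion $L$ on a compact parameter set, and (iii) the eigenvector initialization lands in a shrinking neighborhood of a global minimizer of $L$, from which bi-convex coordinate descent preserves that consistency.

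First, under Assumption A1 ($p,q$ fixed while $u,v\to\infty$) and the bounded fourth moments in A2, standard matrix concentration gives $\|\hat{\bSigma}_i-\bSigma_i\|_F = O_p(v_i^{-1/2})$ and $\|\hat{\bDelta}_i-\bDelta_i\|_F = O_p(u_i^{-1/2})$ uniformly in $i$. Because the constraint set is the unit sphere, $\bgamma^\top\bSigma_i\bgamma \geq \lambda_{\min}(\bSigma_i)>0$, so a first-order Taylor expansion of $\log$ yields
\begin{equation*}
\sup_{\|\bgamma\|=1,\,1\leq i\leq n}\left|\log(\bgamma^\top\hat{\bSigma}_i\bgamma)-\log(\bgamma^\top\bSigma_i\bgamma)\right|=o_p(1),
\end{equation*}
and analogously for $\btheta$ and $\hat{\bDelta}_i$. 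Next I would introduce the population criterion
\begin{equation*}
L(\bgamma,\btheta,\alpha,\bbeta)=\lim_{n\to\infty}\frac{1}{n}\sum_{i=1}^{n}\left\{\log(\bgamma^\top\bSigma_i\bgamma)-\alpha\log(\btheta^\top\bDelta_i\btheta)-\bw_i^\top\bbeta\right\}^{2},
\end{equation*}
restrict $(\alpha,\bbeta)$ to a sufficiently large compact set (outside which $L$ is large by the quadratic structure in these arguments), and combine the display above with a uniform law of large numbers to obtain $\sup|\ell_n-L|=o_p(1)$. Assumption A4 forces each summand of $L$ to vanish at $(\bgamma^\ast,\btheta^\ast,\alpha^\ast,\bbeta^\ast)$, which is therefore a global minimizer of $L$.

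The main obstacle is the third step: the objective is only bi-convex, so the Algorithm~\ref{alg:obj_solve} output depends on initialization. Here Assumption A3 is the critical lever. Since $\bSigma_i=\bPi\bLambda_i\bPi^\top$ share the common eigenbasis $\bPi$, the pooled average $\bar{\bSigma}$ has eigenvectors $\bPi$ as well; by sample covariance consistency and a Davis--Kahan argument, the eigenvectors of $\bar{\bS}_y$ converge (up to sign) to the columns of $\bPi$, and likewise those of $\bar{\bS}_x$ to $\bUpsilon$. Assumption A4 guarantees that one such pair coincides with $(\bgamma^\ast,\btheta^\ast)$, providing an initialization $(\bgamma^{(0)},\btheta^{(0)})$ within an $o_p(1)$ neighborhood of the truth. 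At that point one verifies that $(\bgamma^\ast,\btheta^\ast,\alpha^\ast,\bbeta^\ast)$ is a fixed point of the population coordinate-descent map; a stochastic equicontinuity argument (plus the closed-form updates~\eqref{eq:alpha_update}--\eqref{eq:beta_update} and the Lagrangian update of~\eqref{eq:gamma_Lag}) then shows the sample updates remain $o_p(1)$-close to this fixed point after any bounded number of iterations. An argmin continuity theorem (e.g., van der Vaart, Theorem 5.7) applied on the shrinking neighborhood delivers the asserted consistency.

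The trickiest ingredients are therefore not the LLN-style bounds but the local-to-global bridge in the third paragraph: handling sign ambiguities of eigenvectors (harmless since $\bgamma$ and $-\bgamma$ give identical quadratic forms), controlling perturbations when some eigenvalues of $\bar{\bSigma}$ or $\bar{\bDelta}$ are close, and dealing with the non-uniqueness allowed by Assumption A4, since multiple pairs $(\bpi_{k_i},\bupsilon_{j_i})$ may satisfy Model~\eqref{eq:model}. The proposition is then naturally read as consistency toward the (possibly multi-valued) set of population minimizers, with the specific limit determined by the basin of attraction of the eigenvector initialization.
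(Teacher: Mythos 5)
Your proposal is correct and follows essentially the same route as the paper, whose own proof is only a brief sketch asserting that the eigenvectors of $\bar{\bS}_{y}$ and $\bar{\bS}_{x}$ consistently estimate $\bPi$ and $\bUpsilon$ under A1--A3 and that consistency then ``follows'' from $M$-estimation theory; your second and third paragraphs supply exactly the uniform-convergence and initialization/bi-convexity details that the paper omits, and your closing remark about set-valued consistency under non-unique minimizers is a point the paper does not address. One small caution: the claim $\sup_{\|\bgamma\|=1,\,1\leq i\leq n}\bigl|\log(\bgamma^\top\hat{\bSigma}_{i}\bgamma)-\log(\bgamma^\top\bSigma_{i}\bgamma)\bigr|=o_p(1)$ taken over $n$ growing subjects needs either a rate condition linking $n$ and $v$ (a union bound under fourth moments requires roughly $n/v\rightarrow0$) or, better, should be replaced by control of the average error entering the criterion, and it also presupposes $\lambda_{\min}(\bSigma_{i})$ bounded away from zero uniformly in $i$ --- neither of which the paper addresses either.
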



\subsection{Higher-order components and choosing the number of components}
\label{sub:HO_comp}

Section~\ref{sub:estimation} introduces an algorithm to identify the first component using an OLS criterion. In this section, an approach is proposed to identify higher-order components. Here, the projections are assumed to be one-to-one, that is, one linear projection on $\by$ corresponds to one linear projection on $\bx$. Assume $\hat{\bGamma}^{(k-1)}=(\hat{\bgamma}_{1},\dots,\hat{\bgamma}_{(k-1)})\in\mathbb{R}^{q\times (k-1)}$ and $\hat{\bTheta}^{(k-1)}=(\hat{\btheta}_{1},\dots,\hat{\btheta}_{(k-1)})\in\mathbb{R}^{p\times (k-1)}$ are the first $(k-1)$ identified component pairs, for $k=2,\dots,\min(p,q)$. It is proposed to remove these identified components from the data to identify the next pair. Let
\begin{equation}
    \hat{\bY}_{i}^{(k)}=\bY_{i}-\bY_{i}\hat{\bGamma}^{(k-1)}\hat{\bGamma}^{(k-1)\top} \quad \text{and} \quad \hat{\bX}_{i}^{(k)}=\bX_{i}-\bX_{i}\hat{\bTheta}^{(k-1)}\hat{\bTheta}^{(k-1)\top},
\end{equation}
where $\bY_{i}=(\by_{i1},\dots,\by_{iv_{i}})^\top\in\mathbb{R}^{v_{i}\times q}$ and $\bX_{i}=(\bx_{i1},\dots,\bx_{iu_{i}})^\top\in\mathbb{R}^{u_{i}\times p}$ denote the data from subject $i$, for $i=1,\dots,n$. Consider $\{\hat{\bY}_{i}^{(k)},\hat{\bX}_{i}^{(k)},\bw_{i}\}$ as the new data and apply Algorithm~\ref{alg:obj_solve} to identify the $k$th component.

To choose the number of components, \citet{zhao2021covariate} introduced a criterion named average deviation from diagonality (DfD) for the problem of principal regression for covariance matrix outcomes. This was introduced based on the nature that the linear projections are a common diagonalization of the covariance matrices. Extending to the covariance-on-covariance regression problem considered in this manuscript, the following metric is considered. Let $\hat{\bGamma}^{(k)}\in\mathbb{R}^{q\times k}$ and $\hat{\bTheta}^{(k)}\in\mathbb{R}^{p\times k}$ be the estimated first $k$ components, define
\begin{equation}
    \mathrm{DfD}(k)=\max\left\{\mathrm{DfD}(\hat{\bGamma}^{(k)}), \mathrm{DfD}(\hat{\bTheta}^{(k)})\right\},
\end{equation}
where
\begin{equation}
    \mathrm{DfD}(\hat{\bGamma}^{(k)})=\prod_{i=1}^{n}\nu\left(\hat{\bGamma}^{(k)\top}\hat{\bSigma}_{i}\hat{\bGamma}^{(k)}\right)^{v_{i}/\sum_{i}v_{i}}, \text{ and } \mathrm{DfD}(\hat{\bTheta}^{(k)})=\prod_{i=1}^{n}\nu\left(\hat{\bTheta}^{(k)\top}\hat{\bDelta}_{i}\hat{\bTheta}^{(k)}\right)^{u_{i}/\sum_{i}u_{i}}.
\end{equation}
For a square matrix $\bA$, $\nu(\bA)=\det\{\mathrm{diag}(\bA)\}/\det(\bA)$, where $\mathrm{diag}(\bA)$ is a diagonal matrix with the same diagonal elements as in $\bA$ and $\det(\bA)$ is the determinant of $\bA$. $\nu(\bA)\geq1$ and the equality holds if and only if $\bA$ is a diagonal matrix. Considering a threshold $\varrho$, for example, a threshold of $\varrho=2$ recommended in \citet{zhao2021covariate}, the number of components is chosen as
\begin{equation}
    \hat{k}=\max\left\{k:\mathrm{DfD}(k)\leq \varrho\right\}.
\end{equation}



\section{Simulation Study}
\label{sec:sim}

In this section, the performance of the proposed approach is evaluated through simulation studies. As no existing approach was designed to perform regression with multiple covariance predictors and multiple covariance outcomes, an approach integrating the common PCA~\citep{flury1984common} and regression, named as \textbf{CPCA-Reg}, is considered as the competing method. This CPCA-Reg approach has three steps. (1) Perform common PCA on $\{\bX_{i}\}$ and $\{\bY_{i}\}$ separately to obtain an estimate of the eigenvectors and the corresponding eigenvalues, where each subject is considered as a group. (2) Choose the top components that account for over $85\%$ of the total data variation, where the total data variation is calculated across all subjects. (3) For each pair of the chosen component, perform a linear regression using Model~\eqref{eq:model} to obtain an estimate of the coefficients. For the proposed \textbf{C}ovariance-\textbf{o}n-\textbf{C}ovariance \textbf{Reg}ression approach, denoted as \textbf{CoCReg}, sample covariance matrices are used to replace $\hat{\bSigma}_{i}$ and $\hat{\bDelta}_{i}$ in the algorithm.

In Simulation (i), common eigenstructure is assumed. Covariance matrices are generated following the eigendecompositions, $\bSigma_{i}=\bPi\bLambda_{i}\bPi^\top$ and $\bDelta_{i}=\bUpsilon\bOmega_{i}\bUpsilon^\top$, where $\bPi=(\bpi_{1},\dots,\bpi_{q})\in\mathbb{R}^{q\times q}$ and $\bUpsilon=(\bupsilon_{1},\dots,\bupsilon_{p})\in\mathbb{R}^{p\times p}$ are orthonormal matrices of common eigenvectors, $\bLambda_{i}=\mathrm{diag}\{\lambda_{i1},\dots,\lambda_{iq}\}\in\mathbb{R}^{q\times q}$ and $\bOmega_{i}=\mathrm{diag}\{\omega_{i1},\dots,\omega_{ip}\}\in\mathbb{R}^{p\times p}$ are diagonal matrices of individual eigenvalues, for $i=1,\dots,n$. 
Two pairs of components are chosen to satisfy Model~\eqref{eq:model}: C1 of $(\bpi_{2},\bupsilon_{1})$ and C2 of $(\bpi_{4},\bupsilon_{3})$. For $\bw_{i}$, a case of $r=2$ is considered, where the first element is one for the intercept term and the second element is generated from a Bernoulli distribution with probability $0.5$ of being one. In C1, $\alpha=3$ and $\bbeta=(1,-1)^\top$, and in C2, $\alpha=2$ and $\bbeta=(-1,1)^\top$.
For the rest dimensions, the eigenvalues are generated from a log-normal distribution with mean value decreasing from $1$ to $-2$ and standard deviation $0.1$. Thus, the diagonal elements in $\bLambda_{i}$ and $\bOmega_{i}$ are exponentially decaying. 
With the covariance matrices, $\by_{it}$ and $\bx_{is}$ are generated from the multivariate normal distribution with mean zero. Here, consider the number of observations within a subject is the same with $v_{i}=v$ and $u_{i}=u$ for $i=1,\dots,n$.
Two scenarios of data dimension and sample size are considered, $(p,q)=(10,5),~(n,u,v)=(100,100,100)$ and $(p,q)=(100,100),~(n,u,v)=(500,500,500)$, where the data dimension in the second scenario is close to the HCP-A data application in Section~\ref{sec:fmri}. For the larger dimension case, only one component (C1) is assumed to satisfy the model assumption to have distinguishable eigenvalues.
In Simulation (ii), a setting of partial common diagonalization is considered for the dimension of $(p,q)=(10,5)$. For $\bDelta_{i}$'s, the first five eigenvectors are assumed to be identical across subjects; and for $\bSigma_{i}$'s, the first three eigenvectors are identical. The rest eigenvectors are randomly generated for each subject but to satisfy the orthonormal condition. The rest parameter settings are the same as in Simulation (i). Thus, in this simulation, C1 still satisfies the assumptions; while in C2, only estimating $\btheta$ satisfies the assumptions, but estimating $\bgamma$ does not.
To evaluate the performance of identifying $\bgamma$ and $\btheta$, the absolute value of the inner product of the estimate and the truth, that is $|\langle\hat{\bgamma},\bgamma\rangle|$ and $|\langle\hat{\btheta},\btheta\rangle|$, is used as a similarity metric between two unit-norm vectors.

Table~\ref{table:sim_est} presents the results. 
In Simulation (i), for the case of $(p,q)=(10,5)$ and $(n,u,v)=(100,100,100)$, the proposed CoCReg approach identifies the two components with high similarities, where the similarity of estimating $\bgamma$ is over $0.980$ and the similarity of estimating $\btheta$ is over $0.960$. The competing method of CPCA-Reg only identifies the first component (C1), as the first component may already account for over $85\%$ of the data variation. Though the CPCA-Reg approach yields a higher similarity in estimating $\bgamma$ and $\btheta$, the bias of estimating $\alpha$ is higher using a second-step regression model. Here, a higher similarity in estimating $\bgamma$ and $\btheta$ from the CPCA-Reg approach is expected as the data are generated from multivariate normal distributions under the common PCA assumption. The likelihood-based estimator of $\bgamma$ and $\btheta$ from CPCA-Reg yields the optimal performance. However, the CPCA-Reg approach requires a pair-matching procedure when fitting the regression models. Rather, the proposed CoCReg approach directly identifies the target projections.
For the scenario of $(p,q)=(100,100)$ and $(n,u,v)=(500,500,500)$, the CoCReg approach yields a good estimate in both $\bgamma$ and $\btheta$. The bias of estimating $\alpha$ and $\bbeta$, especially $\alpha$, is lower, compared to the CPCA-Reg approach. When the sample size increases to $(n,u,v)=(1000,1000,1000)$, the performance of CoCReg improves with lower bias, standard error (SE), and mean squared error (MSE) in estimating the parameters. However, the CPCA-Reg approach fails to execute as it requires large computing memory for such a large sample size.
In Simulation (ii), the performance of the proposed approach in identifying C1 is very close to the performance in Simulation (i). For C2, though the common diagonalization assumption does not hold for $\bgamma$, the proposed approach correctly identifies $\btheta$ while the bias of estimating model coefficients is higher. Again, the CPCA-Reg approach identifies C1 with higher bias in estimating $\alpha$ and fails to identify C2.
Section~\ref{appendix:sub:sim_nonGaussian} of the supplementary materials presents a simulation study for non-Gaussian distributed data. The results demonstrate the robustness of the proposed estimator to both symmetric (multivariate $t$ with degrees of freedom $\nu=3$) and skewed (matrix gamma) distributions.

To evaluate the finite sample performance of the CoCReg approach, Figure~\ref{fig:sim_asmp} presents the performance on estimating the first component (C1) at various combinations of the sample sizes for the case of $p=10$ and $q=5$. From the figures, as both the sample size ($n$) and the number of observations within each subject ($u$ and $v$) increase, the estimate of the parameters converge to the truth and the SE and MSE converge to zero. Following the procedure introduced in Section~\ref{sub:inference}, $95\%$ confidence intervals of the model coefficients are constructed from $500$ bootstrap samples. As $n,u,v$ increase, the coverage probability (CP) converges to the designated level.

\begin{table}
    \caption{\label{table:sim_est}Performance in identifying target components and estimating model coefficients in the simulation study. SE: standard error; MSE: mean squared error.}
    \begin{center}
        \resizebox{\textwidth}{!}{
        \begin{tabular}{c c c l l r r r r r c r r r}
            \hline
            & & & & & & & \multicolumn{3}{c}{$\hat{\alpha}$} && \multicolumn{3}{c}{$\hat{\beta}_{1}$} \\
            \cline{8-10}\cline{12-14}
            & \multicolumn{1}{c}{\multirow{-2}{*}{$(p,q)$}} & \multicolumn{1}{c}{\multirow{-2}{*}{$(n,u,v)$}} & \multicolumn{1}{c}{\multirow{-2}{*}{Method}} & & \multicolumn{1}{c}{\multirow{-2}{*}{$|\langle\hat{\bgamma},\bgamma\rangle|$ (SE)}} & \multicolumn{1}{c}{\multirow{-2}{*}{$|\langle\hat{\btheta},\btheta\rangle|$ (SE)}} & \multicolumn{1}{c}{Bias} & \multicolumn{1}{c}{SE} & \multicolumn{1}{c}{MSE} && \multicolumn{1}{c}{Bias} & \multicolumn{1}{c}{SE} & \multicolumn{1}{c}{MSE} \\
            \hline
            & & & & C1 & $0.984$ ($0.062$) & $0.964$ ($0.032$) & $-0.131$ & $0.131$ & $0.034$ && $0.037$ & $0.092$ & $0.010$ \\
            & & & \multirow{-2}{*}{CoCReg} & C2 & $0.988$ ($0.039$) & $0.960$ ($0.047$) & $-0.142$ & $0.304$ & $0.112$ && $-0.100$ & $0.225$ & $0.060$ \\
            \cline{4-14}
            & & & & C1 & $0.999$ ($0.000$) & $0.999$ ($0.000$) & $-0.234$ & $0.088$ & $0.062$ && $-0.002$ & $0.087$ & $0.007$ \\
            & \multirow{-4}{*}{$(10,5)$} & \multirow{-4}{*}{($100,100,100$)} & \multirow{-2}{*}{CPCA-Reg} & C2 & \multicolumn{1}{c}{-} & \multicolumn{1}{c}{-} & \multicolumn{1}{c}{-} & \multicolumn{1}{c}{-} & \multicolumn{1}{c}{-} && \multicolumn{1}{c}{-} & \multicolumn{1}{c}{-} & \multicolumn{1}{c}{-} \\
            \cline{2-14}
            & & & \multirow{-1}{*}{CoCReg} & C1 & $0.999$ ($0.000$) & $0.996$ ($0.005$) & $-0.333$ & $0.039$ & $0.112$ && $-0.001$ & $0.017$ & $0.000$ \\
            \cline{4-14}
            & & \multirow{-2}{*}{($500,500,500$)} & \multirow{-1}{*}{CPCA-Reg} & C1 & $0.999$ ($0.000$) & $0.999$ ($0.000$) & $-0.860$ & $0.067$ & $0.745$ && $0.001$ & $0.015$ & $0.000$ \\
            \cline{3-14}
            & & & \multirow{-1}{*}{CoCReg} & C1 & $1.000$ ($0.000$) & $0.998$ ($0.002$) & $-0.158$ & $0.018$ & $0.025$ && $-0.000$ & $0.008$ & $0.000$ \\
            \cline{4-14}
            \multirow{-8}{*}{(i)} & \multirow{-4}{*}{$(100,100)$} & \multirow{-2}{*}{($1000,1000,1000$)} &  \multirow{-1}{*}{CPCA-Reg} & C1 & \multicolumn{1}{c}{-} & \multicolumn{1}{c}{-} & \multicolumn{1}{c}{-} & \multicolumn{1}{c}{-} & \multicolumn{1}{c}{-} && \multicolumn{1}{c}{-} & \multicolumn{1}{c}{-} & \multicolumn{1}{c}{-} \\
            \hline
            & & & & C1 & $0.985$ ($0.056$) & $0.964$ ($0.038$) & $-0.129$ & $0.117$ & $0.030$ && $0.025$ & $0.094$ & $0.009$ \\
            & & & \multirow{-2}{*}{CoCReg} & C2 & \multicolumn{1}{c}{-} & $0.959$ ($0.076$) & $-0.299$ & $0.379$ & $0.232$ && $-0.180$ & $0.261$ & $0.100$ \\
            \cline{4-14}
            & & & & C1 & $0.999$ ($0.000$) & $0.999$ ($0.000$) & $0.269$ & $0.108$ & $0.084$ && $-0.002$ & $0.087$ & $0.007$ \\
            \multirow{-4}{*}{(ii)} & \multirow{-4}{*}{$(10,5)$} & \multirow{-4}{*}{($100,100,100$)} & \multirow{-2}{*}{CPCA-Reg} & C2 & \multicolumn{1}{c}{-} & \multicolumn{1}{c}{-} & \multicolumn{1}{c}{-} & \multicolumn{1}{c}{-} & \multicolumn{1}{c}{-} && \multicolumn{1}{c}{-} & \multicolumn{1}{c}{-} & \multicolumn{1}{c}{-} \\
            \hline
        \end{tabular}
        }
    \end{center}
\end{table}
\begin{figure}
    \begin{center}
        \subfloat[$|\langle\hat{\bgamma},\bgamma\rangle|$]{\includegraphics[width=0.25\textwidth]{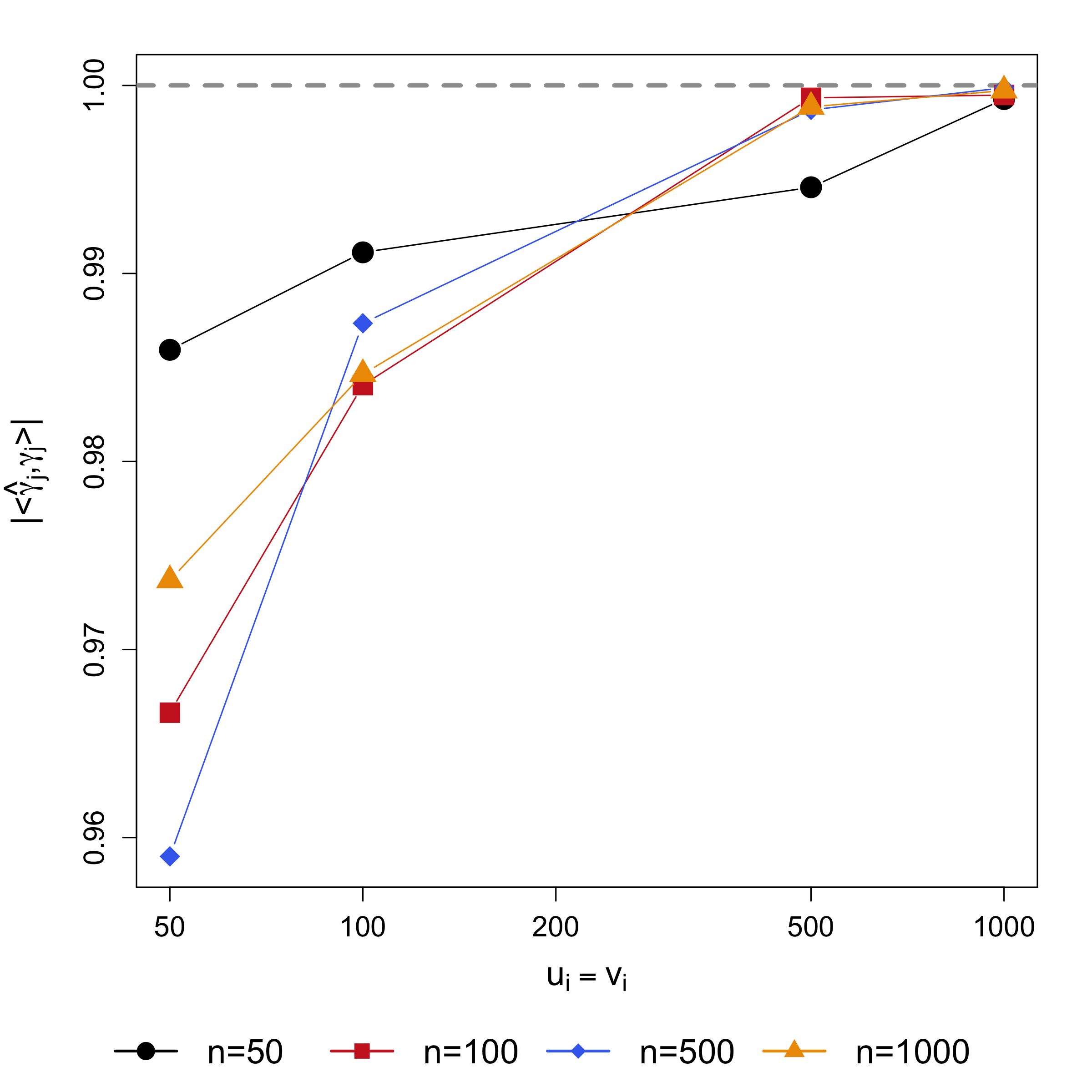}}
        \subfloat[$\mathrm{SE}(|\langle\hat{\bgamma},\bgamma\rangle|)$]{\includegraphics[width=0.25\textwidth]{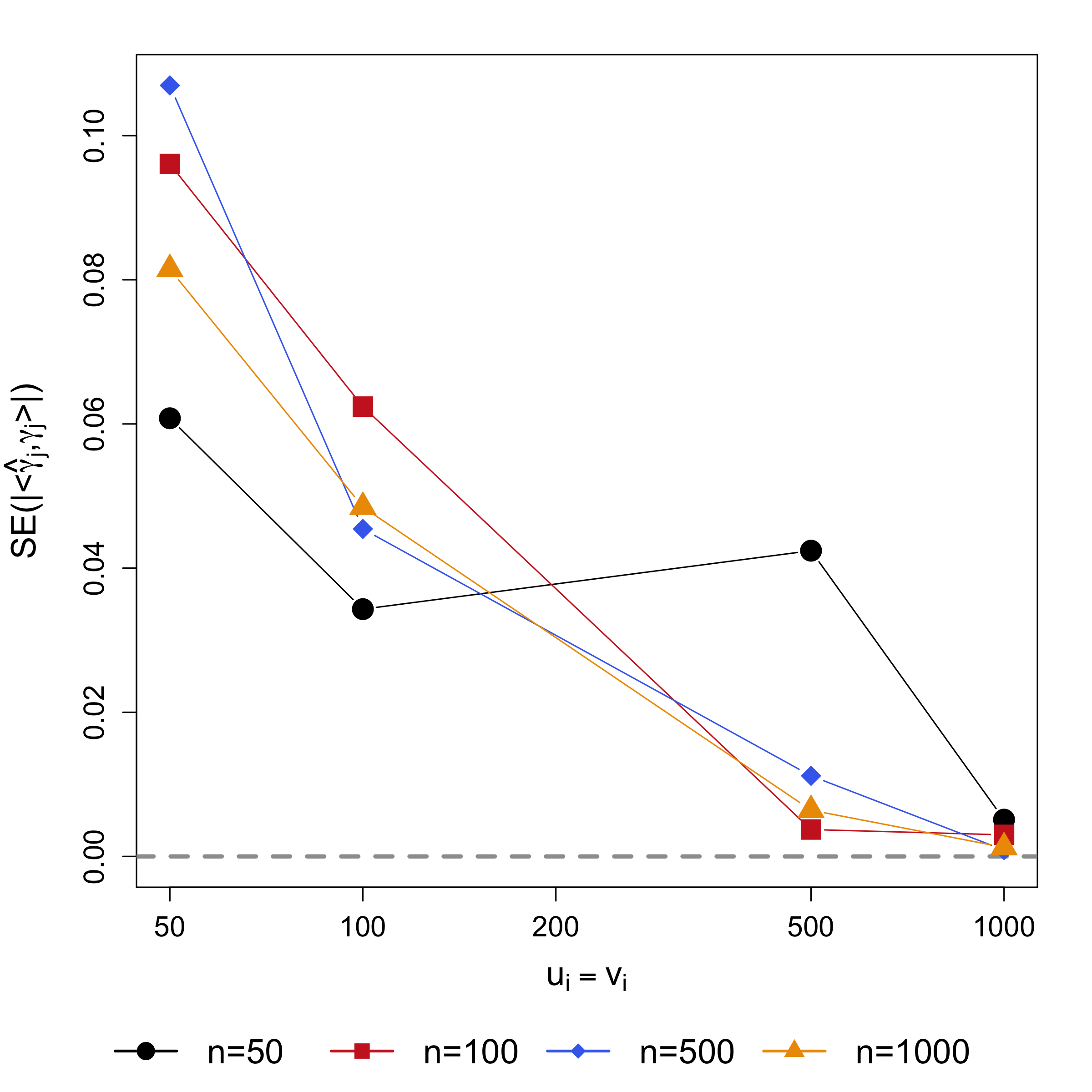}}
        \subfloat[$|\langle\hat{\btheta},\btheta\rangle|$]{\includegraphics[width=0.25\textwidth]{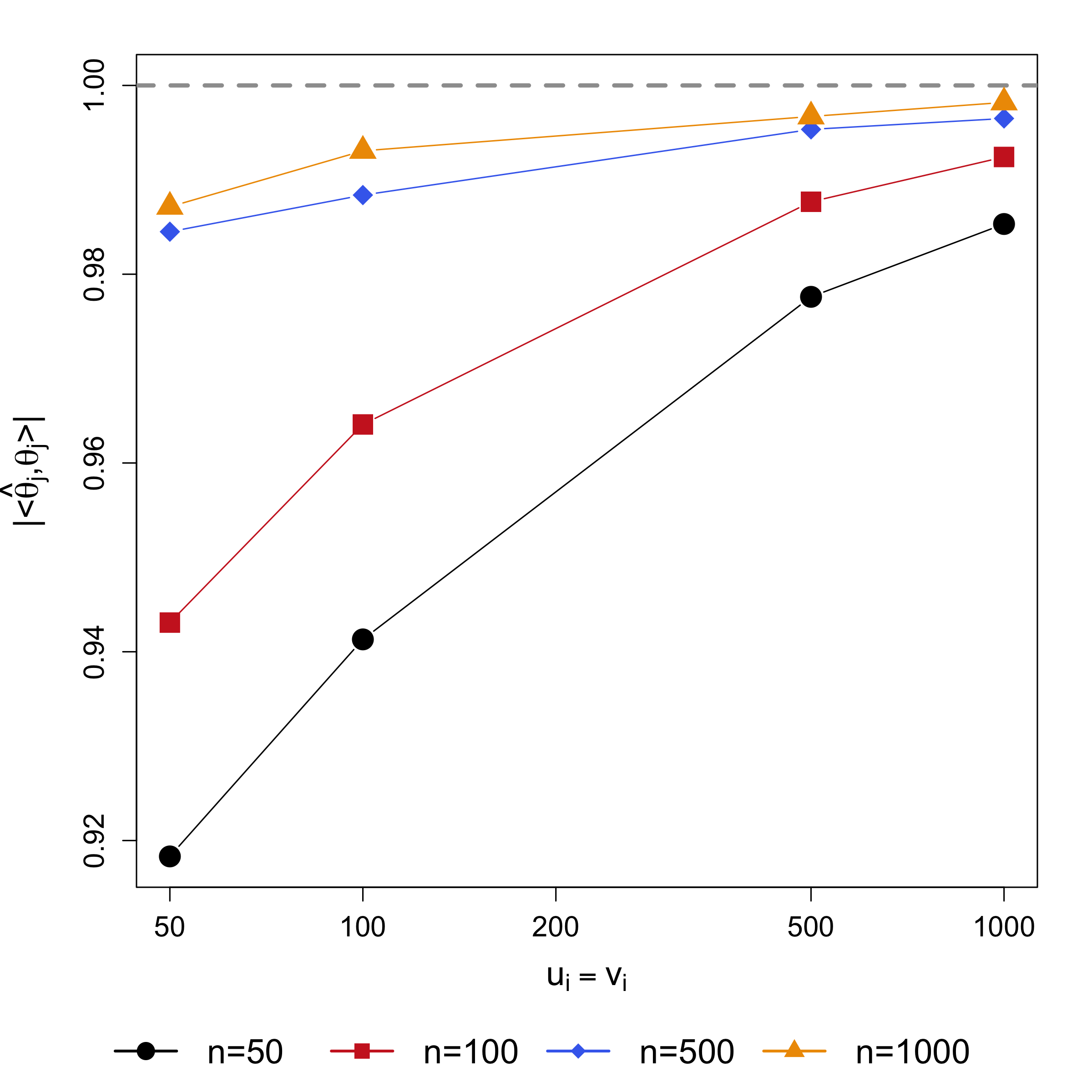}}
        \subfloat[$\mathrm{SE}(|\langle\hat{\btheta},\btheta\rangle|)$]{\includegraphics[width=0.25\textwidth]{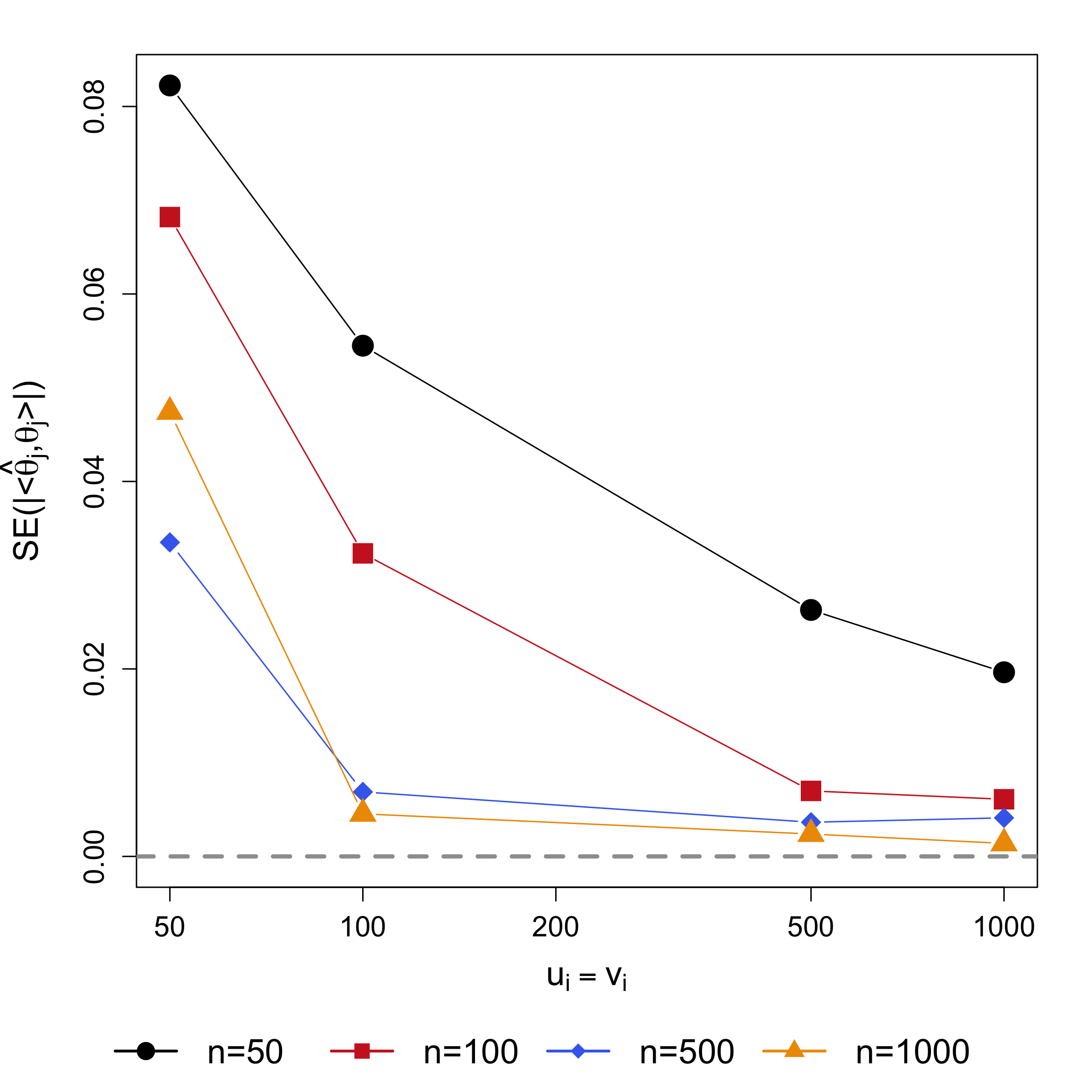}}

        \subfloat[$\hat{\alpha}$]{\includegraphics[width=0.25\textwidth]{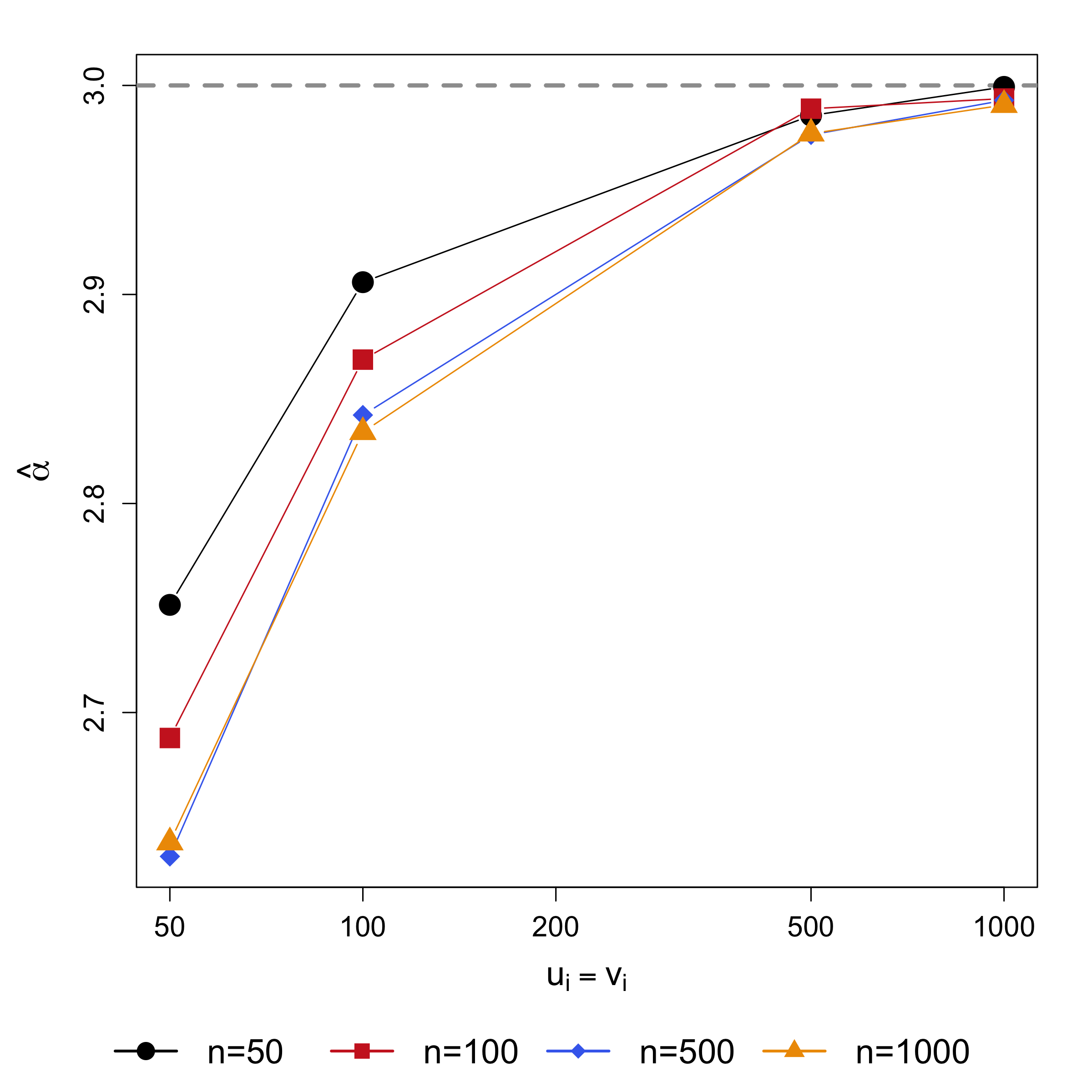}}
        \subfloat[$\mathrm{SE}(\hat{\alpha})$]{\includegraphics[width=0.25\textwidth]{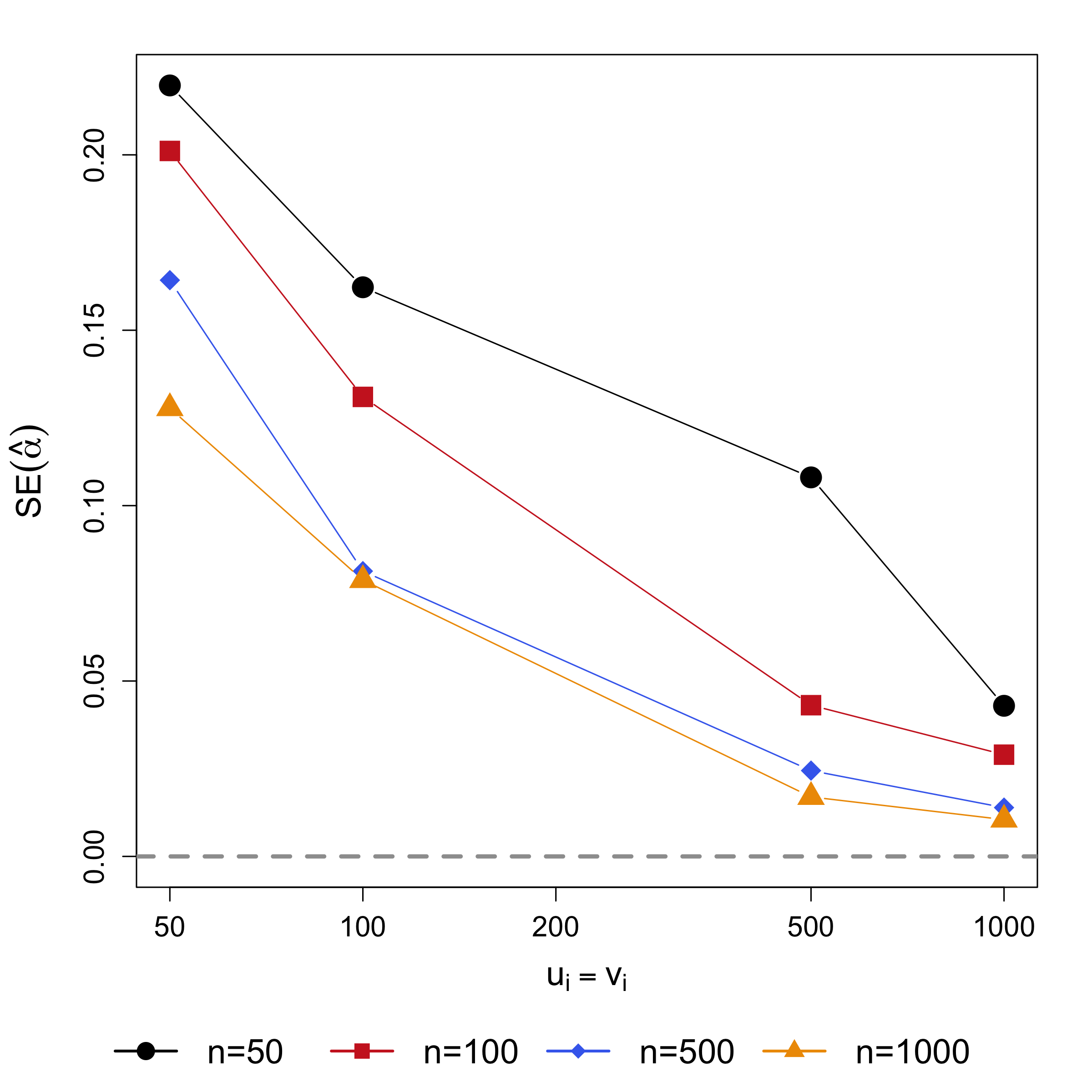}}
        \subfloat[$\mathrm{MSE}(\hat{\alpha})$]{\includegraphics[width=0.25\textwidth]{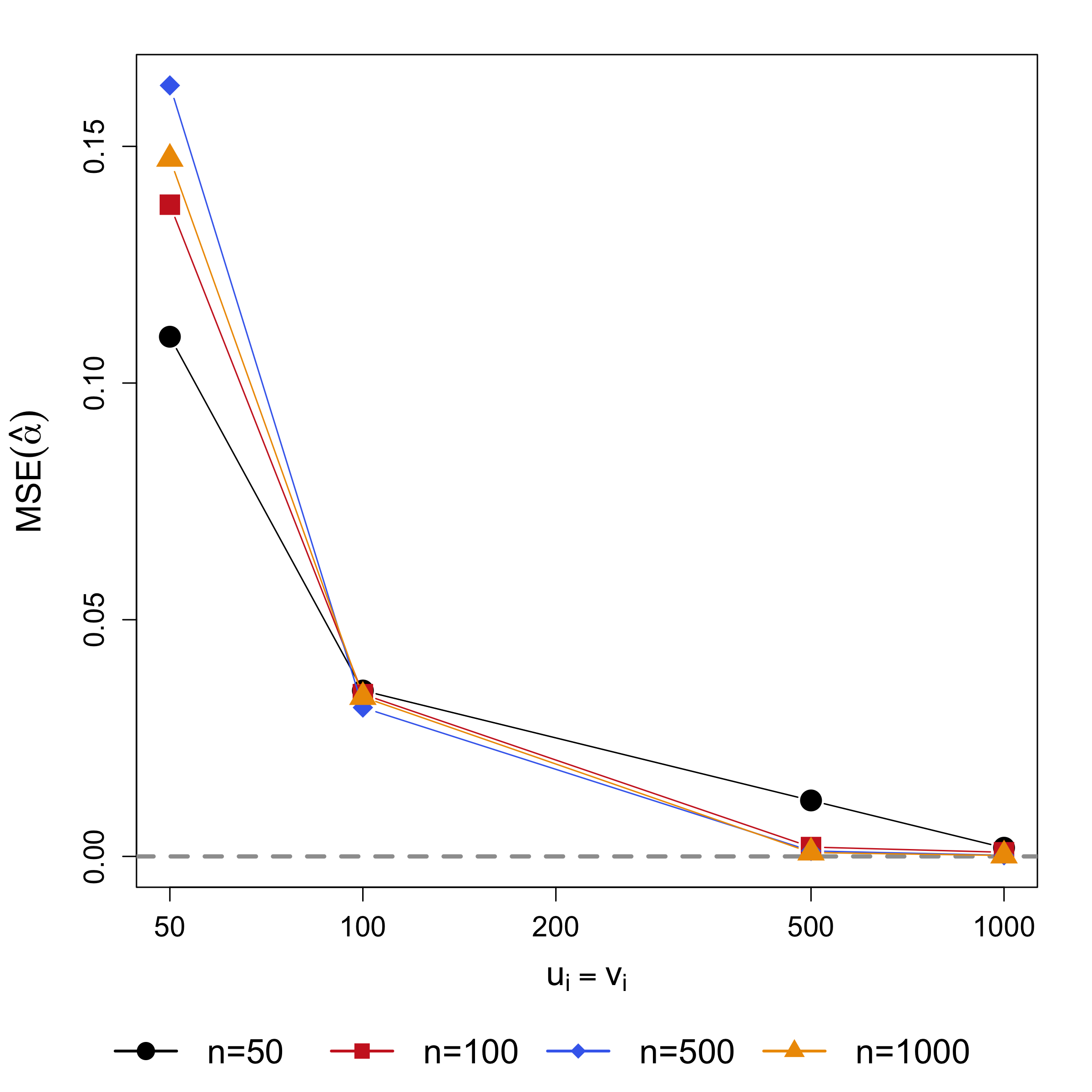}}
        \subfloat[$\mathrm{CP}(\hat{\alpha})$]{\includegraphics[width=0.25\textwidth]{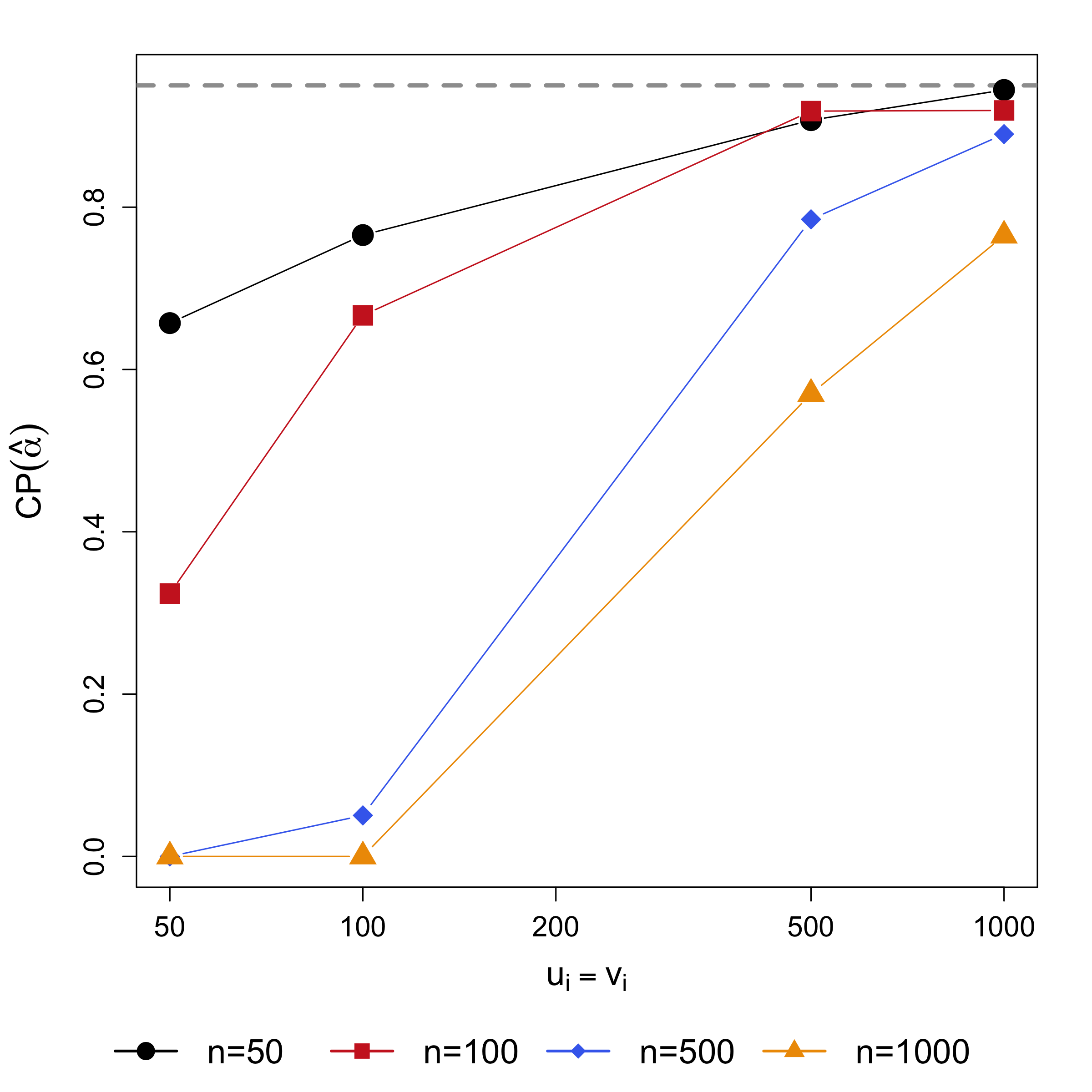}}

        \subfloat[$\hat{\beta}_{1}$]{\includegraphics[width=0.25\textwidth]{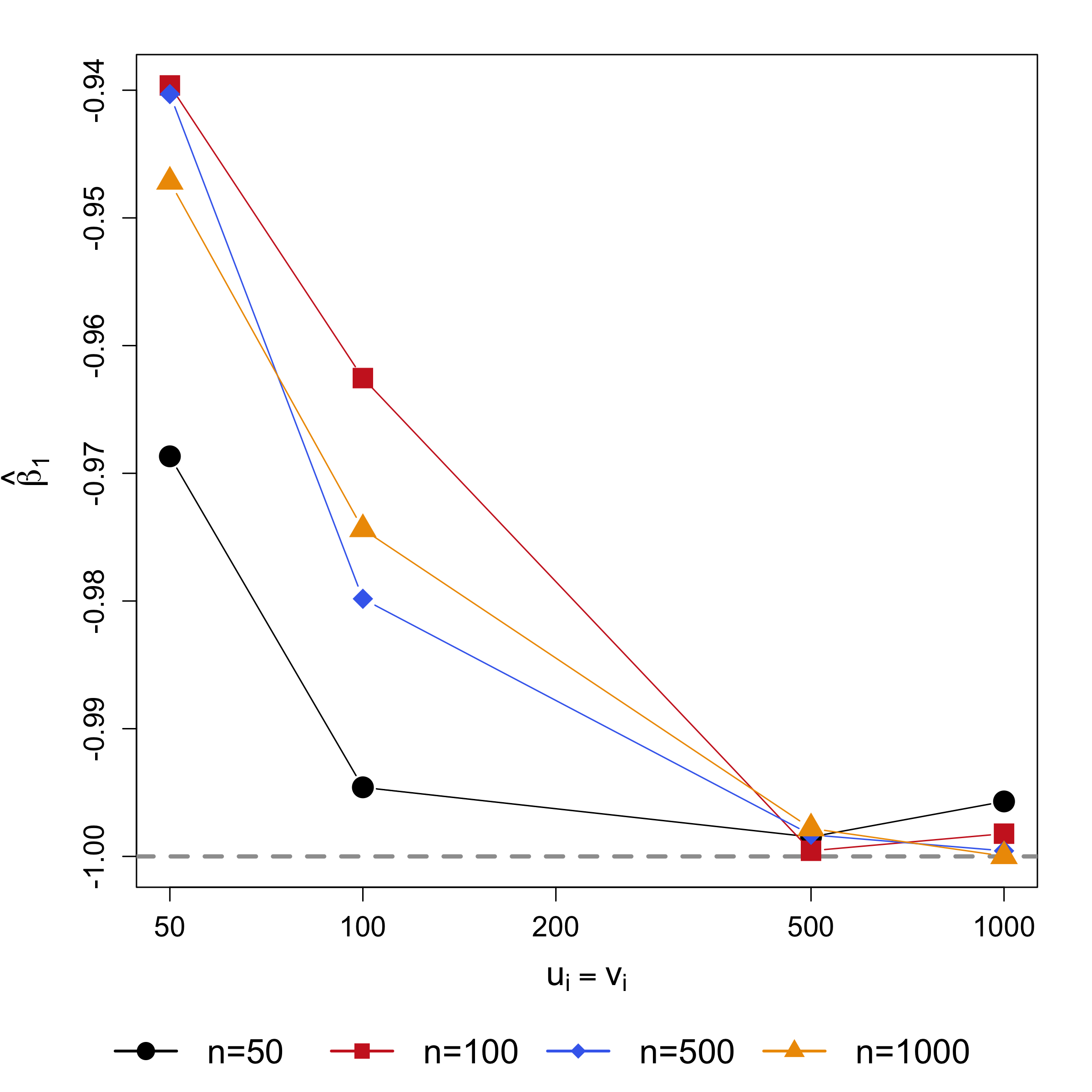}}
        \subfloat[$\mathrm{SE}(\hat{\beta}_{1})$]{\includegraphics[width=0.25\textwidth]{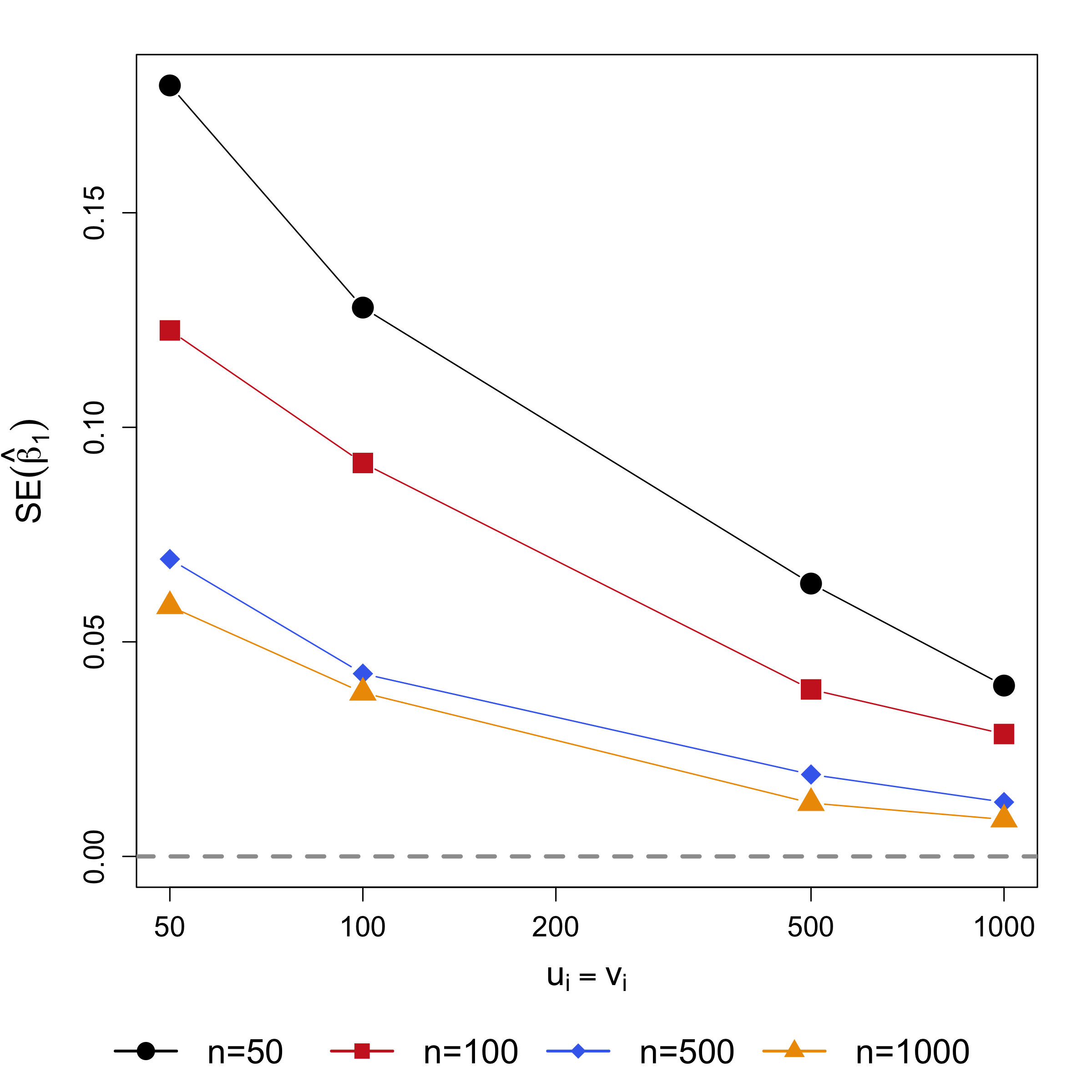}}
        \subfloat[$\mathrm{MSE}(\hat{\beta}_{1})$]{\includegraphics[width=0.25\textwidth]{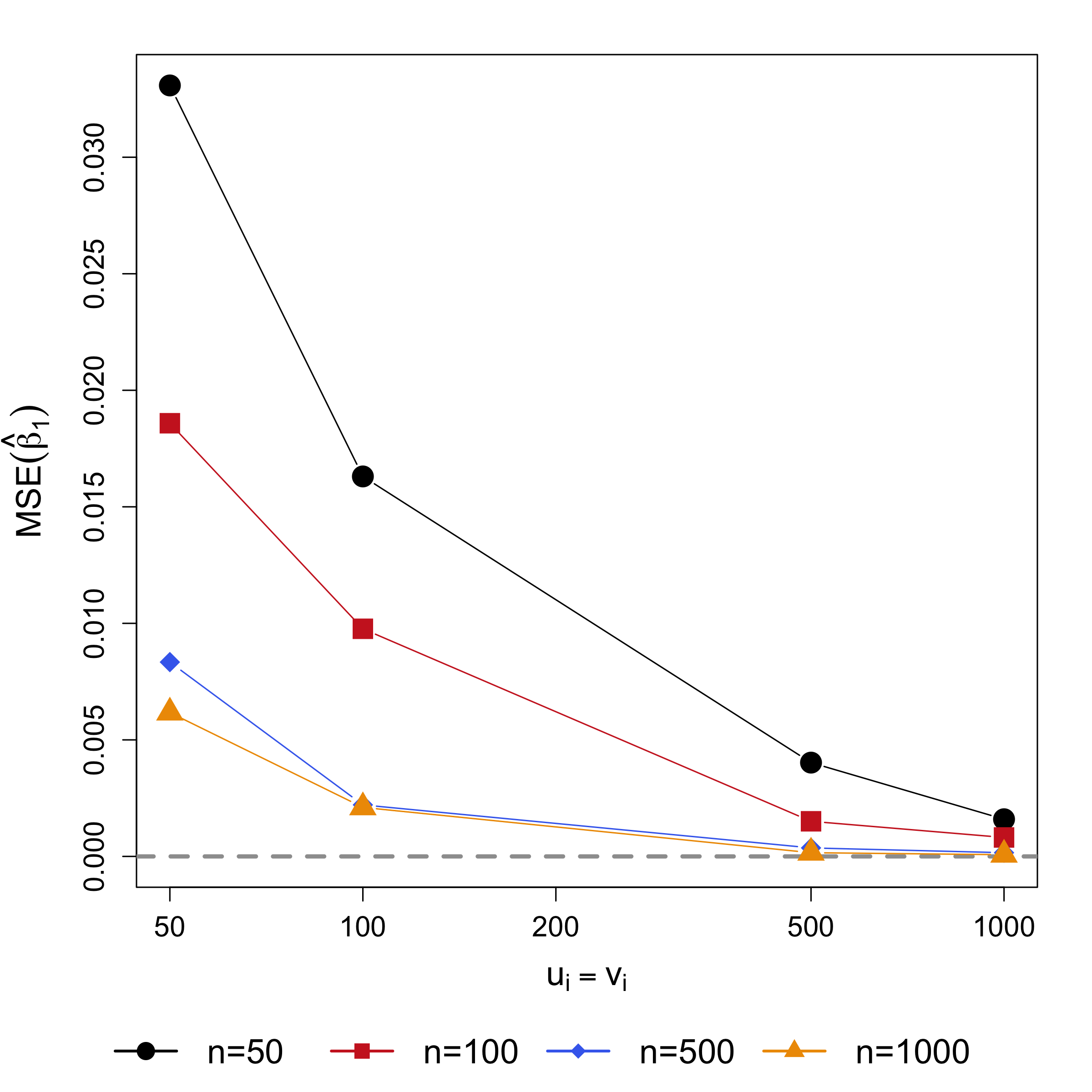}}
        \subfloat[$\mathrm{CP}(\hat{\beta}_{1})$]{\includegraphics[width=0.25\textwidth]{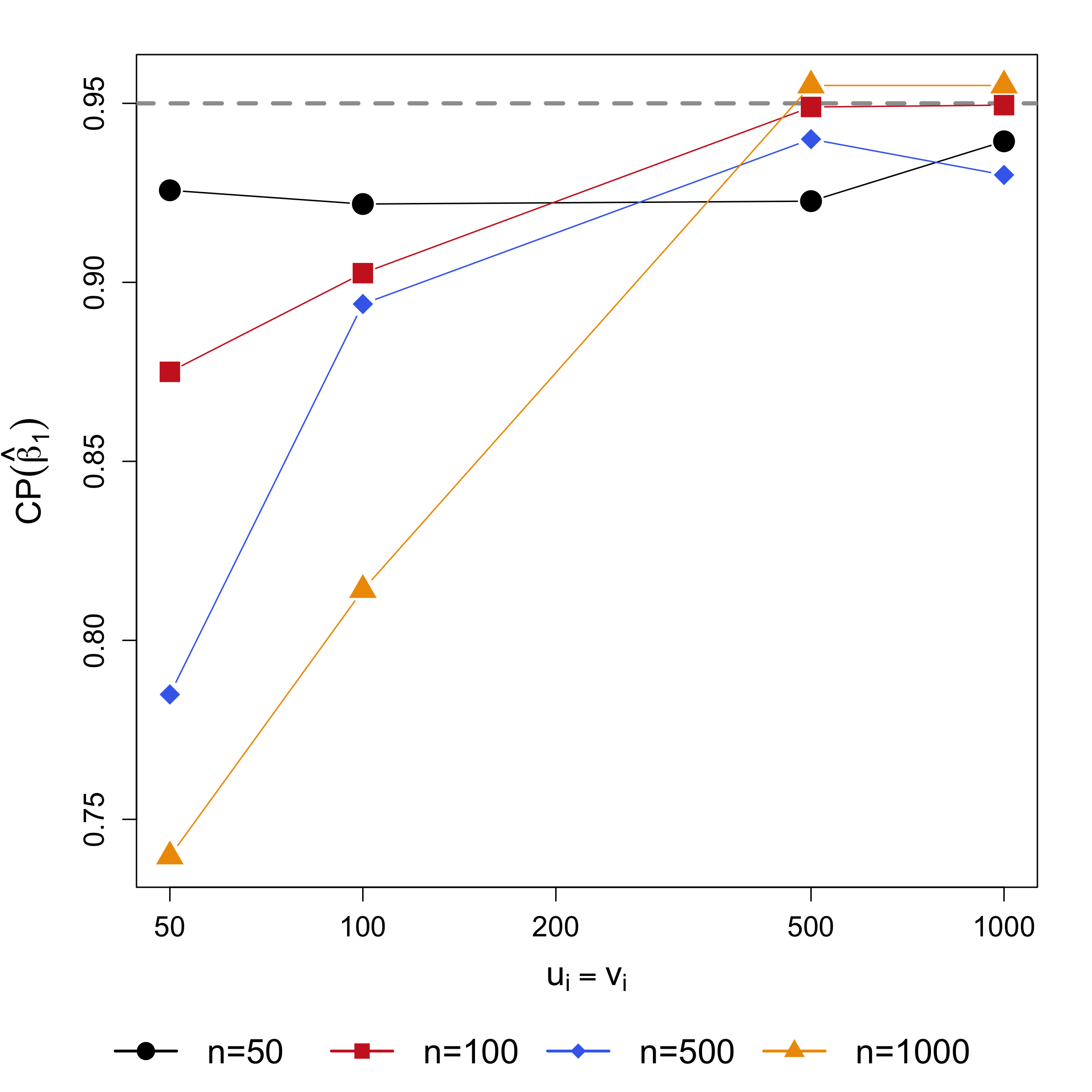}}
    \end{center}
    \caption{\label{fig:sim_asmp}Estimation performance of the proposed approach (CoCReg) on estimating the first component (C1) as the sample size $(n,u,v)$ varies with $p=10$ and $q=5$ in the simulation study. SE: standard error; MSE: mean squared error; CP: coverage probability.}
\end{figure}

\section{The Lifespan Human Connectome Project Aging Study}
\label{sec:fmri}

We apply the proposed approach to the Lifespan Human Connectome Project (HCP) Aging Study. The goal of the aging study is to utilize the technological advances developed by the HCP study of healthy young adults to explore typical aging trajectory and how connectome changes in the brain among mature and older adults. In this study, it is considered to use resting-state functional connectivity to predict functional connectivity during a FACENAME task, a paired-associates memory task adapted from~\citet{zeineh2003dynamics}. During the task-based scanning, participants were instructed to encode the names of different face identities in some blocks and mentally recall them in other blocks. The task was designed as a localizer for encoding-related and retrieval-related neural processes. A total of $n=551$ subjects aged between $36$ and $90$ ($245$ Male and $306$ Female) having both the resting-state and task-based fMRI available without any quality control issue are included in the analysis. The fMRI data were minimally preprocessed~\citep{glasser2013minimal}. Signals are extracted from $p=q=75$ brain regions, including $60$ cortical and $15$ subcortical regions grouped into $10$ functional modules, using the Harvard-Oxford Atlas in FSL~\citep{smith2004advances}. Both resting-state and task-based signals are motion corrected. In the resting-state data ($\bx_{is}$), there are $u_{i}=u=478$ observations of each subject and in the task-based data ($\by_{it}$), the number is $v_{i}=v=335$. Other covariates include gender at birth and age ($\bw_{i}$), thus $r=3$ including the intercept.
The validity of model assumptions is examined and discussed in Section~\ref{appendix:subsec:fmri_assumption} of the supplementary materials.

Using the DfD criterion introduced in Section~\ref{sub:HO_comp}, the proposed approach identifies three orthogonal components, denoted as C1, C2, and C3. Table~\ref{tabel:hcp} presents the estimated model coefficients and $95\%$ bootstrap confidence intervals from $500$ samples. In all three components, functional connectivity within the identified task-based network is significantly associated with functional connectivity within the identified resting-state network, positive in C1 and C2 and negative in C3. To better interpret the components, $\bgamma$ and $\btheta$ are sparsified following an \textit{ad hoc} procedure using a fused lasso regression~\citep{tibshirani2005sparsity}, where local smoothness and constancy are imposed within each brain functional module~\citep{grosenick2013interpretable}. Figure~\ref{appendix:fig:hcp} in Section~\ref{appendix:subsec:fmri_results} of the supplementary materials presents the sparsified loading profile colored by the functional modules and Figure~\ref{fig:hcp} presents the nonzero loaded regions in a brain map and the river plot of the loading configuration by functional modules.

The resting-state network of C1 is primarily the default mode network (DMN), which is best known for being active during wakeful rest and consistently captured in rs-fMRI studies~\citep{biswal1995functional,shen2015resting}. The related task-state network is primarily cortical areas of the brain and the cerebellum with similar positive loading values suggesting a global signal pattern during the task. Emerging evidence suggests that there is a potential extrinsic mode network (EMN) during tasks that can be predicted by DMN at the resting state~\citep{hugdahl2015existence}. Our finding confirms this point with the identified task-state network signatures contributing to the establishment of EMN.
The two networks in C2 yields a high similarity of $0.663$ and the module configuration is consistent in the DMN, the fronto-parietal network, the limbic system, and the visual network. 
This high similarity between the two networks is in line with the emerging theory of functional connectome fingerprinting~\citep{finn2015functional}.
Leading consistent areas include the anterior cingulate cortex (ACC), the rostral middle frontal cortex, the medial orbitofrontal cortex, the temporal pole, and the ventromedial prefrontal cortex with a positive loading and the inferior parietal lobule, the lateral occipital cortex, the cuneus, the lingual gyrus, and the fusiform gyrus with a negative loading.
Regions in the prefrontal cortex and frontoparietal network are well known to be involved in working memory functions~\citep{funahashi2006prefrontal,barch2013function}. Hypothetically, functional connectivity between these regions is modulated by working memory loads during tasks and potentially a third brain region at rest. A recent study found that the ACC showed different modulatory interactions with regions in the network both in the resting-state and working memory tasks~\citep{di2020anterior}.
The cuneus, the lingual gyrus, and the fusiform gyrus (also known as the lateral occipitotemporal gyrus) are in the visual network involved in visual processing, visual memory, and face recognition. Coactivation in these regions has been observed in the processing of visual information and working memory tasks~\citep{bogousslavsky1987lingual,mccarthy1999electrophysiological,palejwala2021anatomy,sellal2022anatomical}.
C3 is a component with a low similarity between the two networks ($\text{similarity}=-0.095$). The rs-network consists of the DMN, including the middle temporal gyrus, and the superior frontal gyrus (Left), and the caudal middle frontal gyrus, and subcortical regions, including the parahippocampal (Left) and the caudate. The tb-network mainly consists of the visual network, the somato-motor network, the frontal-parietal network, and the limbic system. Regions include the lateral occipital cortex, the fusiform, the precentral and postcentral gyrus, the superior temporal gyrus, the posterior insula, the temporal gyrus and pole, the inferior temporal gyrus, the medial orbitofrontal cortex, and the rostral middle frontal cortex.
Primarily consisting of the visual and motor networks, the tb-network of C3 is a sensory binding and motor control component potentially activated to the distraction period between task blocks~\citep{dorfel2014common,li2021differential}.
In summary, the proposed approach identifies three orthogonal associated resting-state and task-state brain networks, one related to global signaling (C1), one related to the working memory task (C2), and one unrelated to the task (C3). 

\begin{table}
    \caption{\label{tabel:hcp}Estimated model coefficients, $95\%$ confidence intervals from $500$ bootstrap samples, and the similarity between $\hat{\bgamma}$ and $\hat{\btheta}$ (denoted as $\langle\hat{\bgamma},\hat{\btheta}\rangle$) of the three identified components (C1, C2, and C3) using the proposed CoCReg approach in the HCP Aging study.}
    \begin{center}
        \resizebox{\textwidth}{!}{
        \begin{tabular}{l r r c r r c r r r}
            \hline
            & \multicolumn{2}{c}{rs-fMRI} && \multicolumn{2}{c}{Male$-$Female} && \multicolumn{2}{c}{Age} \\
            \cline{2-3}\cline{5-6}\cline{8-9}
            & \multicolumn{1}{c}{Estimate} & \multicolumn{1}{c}{$95\%$ CI} && \multicolumn{1}{c}{Estimate} & \multicolumn{1}{c}{$95\%$ CI} && \multicolumn{1}{c}{Estimate} & \multicolumn{1}{c}{$95\%$ CI} & \multicolumn{1}{c}{\multirow{-2}{*}{$\langle \hat{\bgamma},\hat{\btheta}\rangle$}} \\
            \hline
            C1 & $0.669$ & $(\textcolor{white}{-}0.572, \textcolor{white}{-}0.028)$ && $0.028$ & $(-0.009, \textcolor{white}{-}0.065)$ && $0.002$ & $(-0.017, \textcolor{white}{-}0.020)$ & $0.490$ \\
            C2 & $0.217$ & $(\textcolor{white}{-}0.142, \textcolor{white}{-}0.290)$ && $-0.067$ & $(-0.118, -0.015)$ && $0.043$ & $(\textcolor{white}{-}0.018, \textcolor{white}{-}0.069)$ & $0.663$ \\
            C3 & $-0.301$ & $(-0.407, -0.195)$ && $0.002$ & $(-0.052, \textcolor{white}{-}0.056)$ && $-0.053$ & $(-0.080, -0.027)$ & $-0.095$ \\
            \hline
        \end{tabular}
        }
    \end{center}
\end{table}

\begin{figure}
    \begin{center}
        \subfloat[C1-tb brain map ($\bgamma$)]{\includegraphics[width=0.25\textwidth]{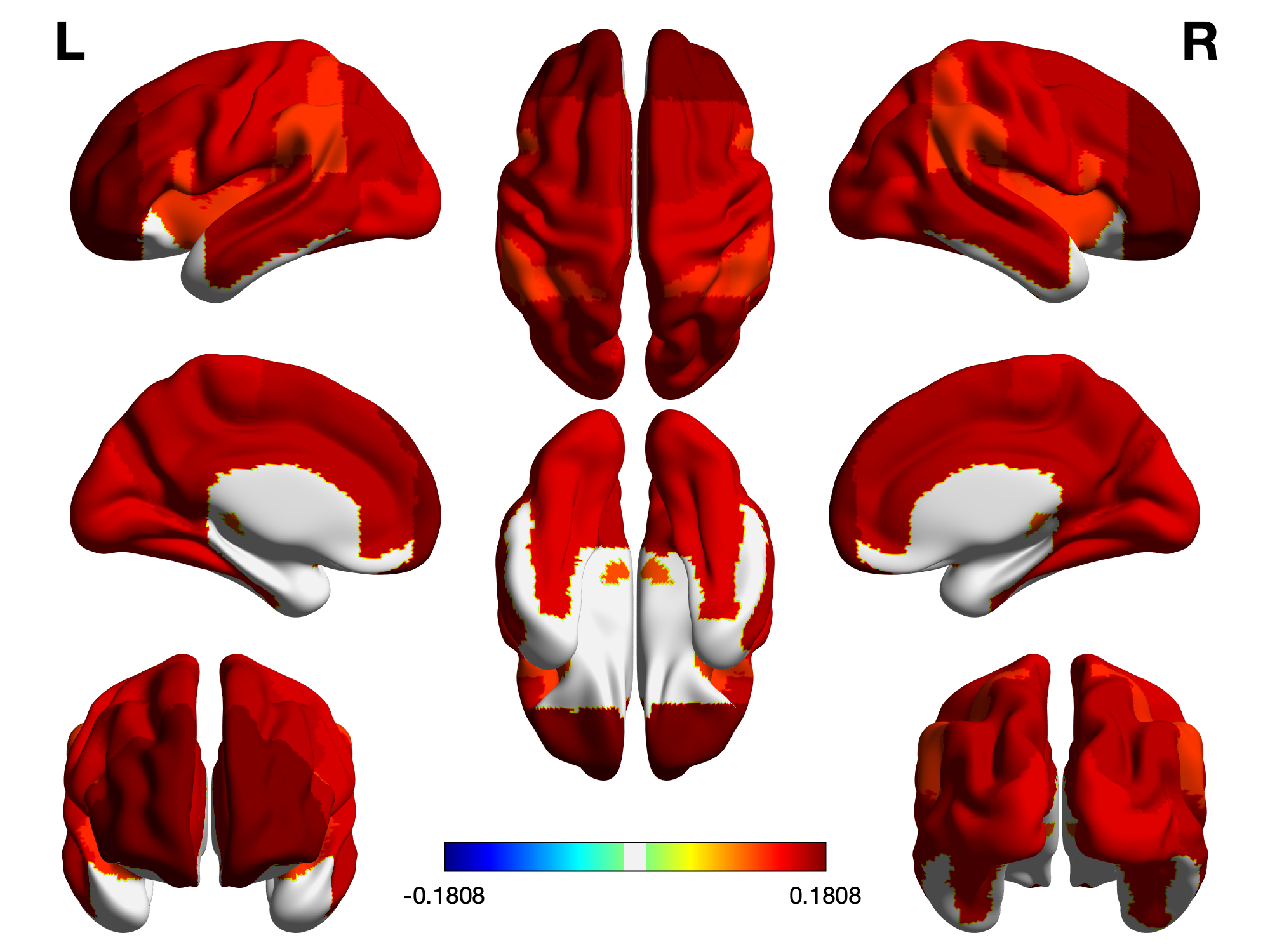}}
        \subfloat[C1-tb river plot ($\bgamma$)]{\includegraphics[width=0.25\textwidth]{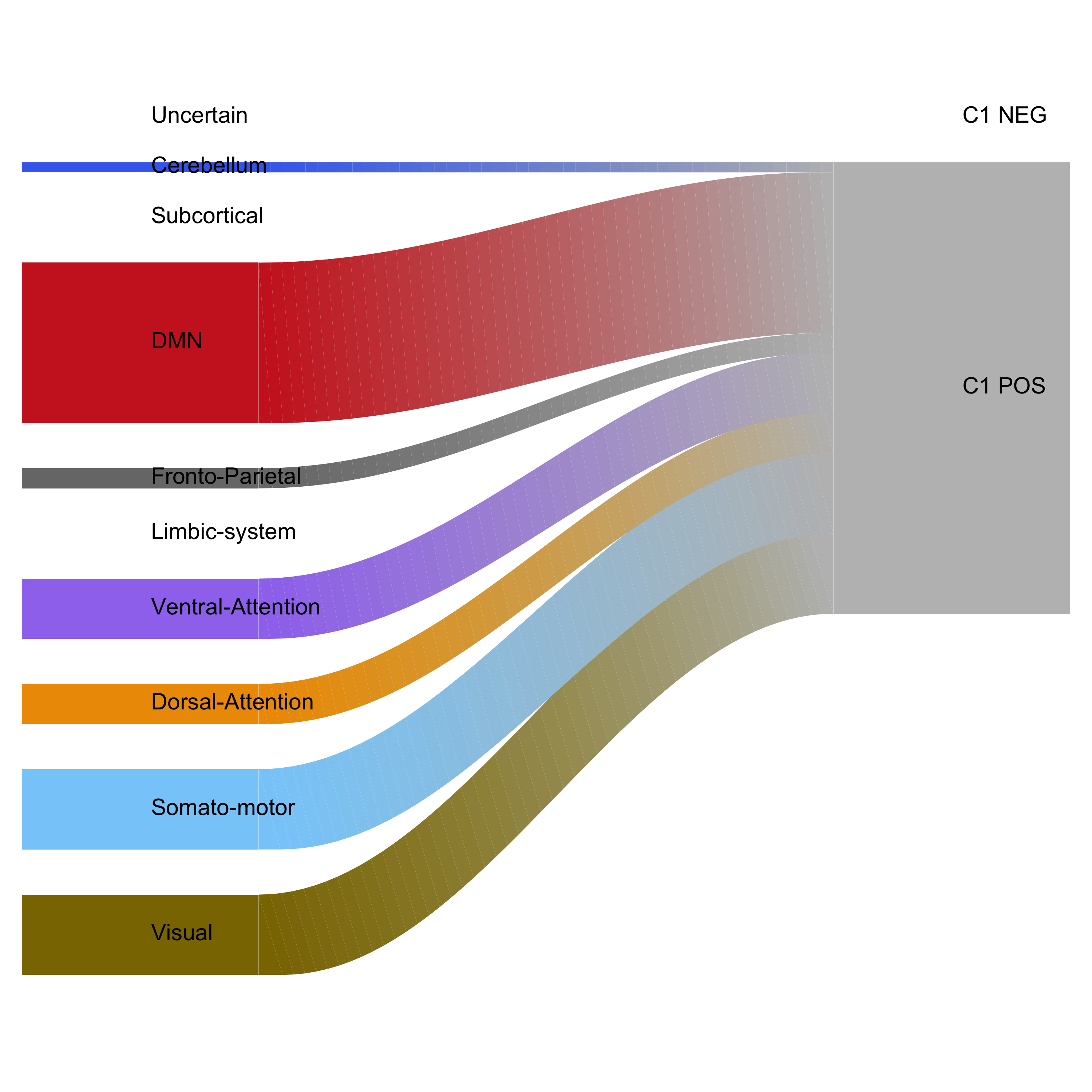}}
        \subfloat[C1-rs brain map ($\btheta$)]{\includegraphics[width=0.25\textwidth]{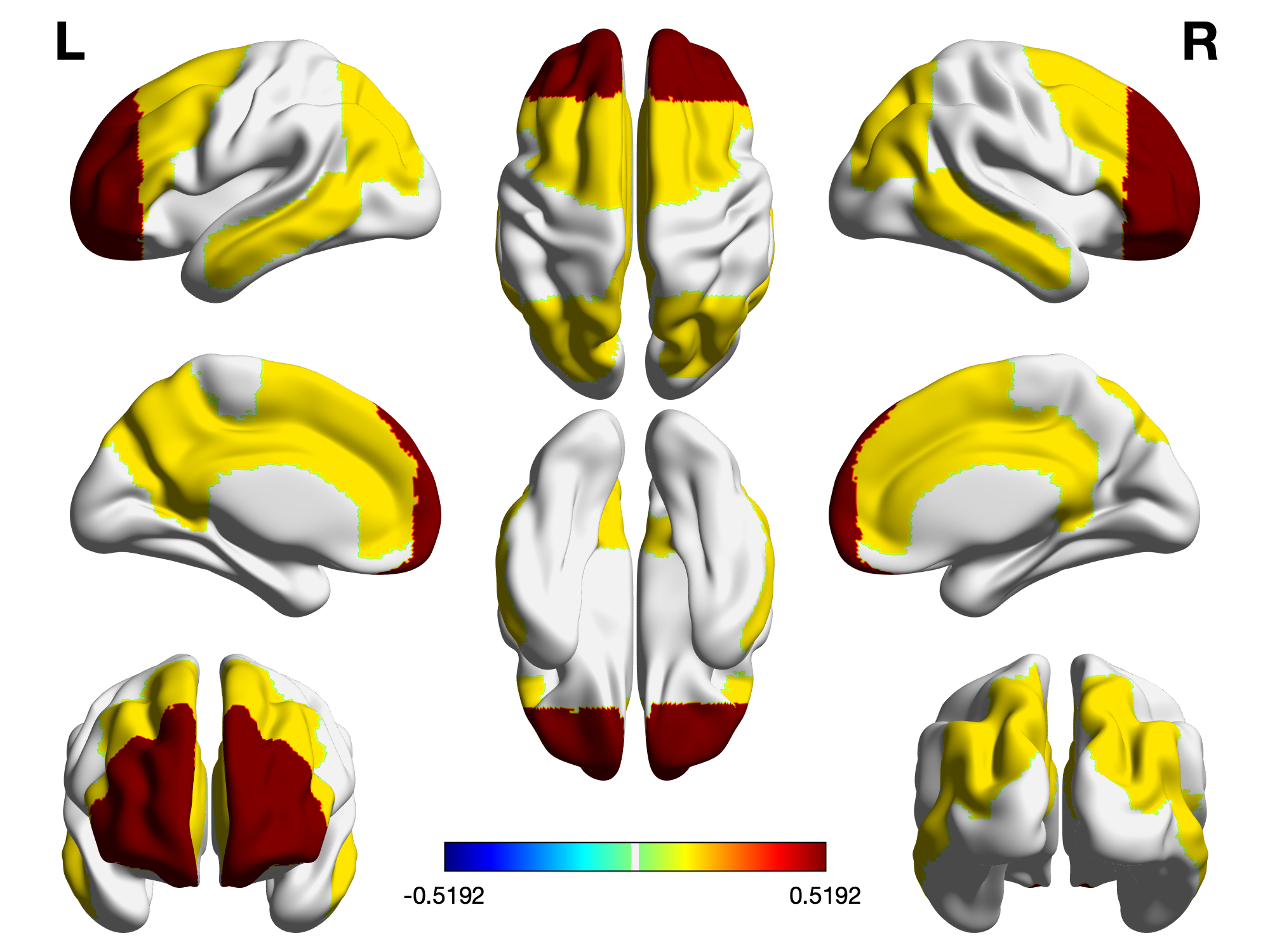}}
        \subfloat[C1-rs river plot ($\btheta$)]{\includegraphics[width=0.25\textwidth]{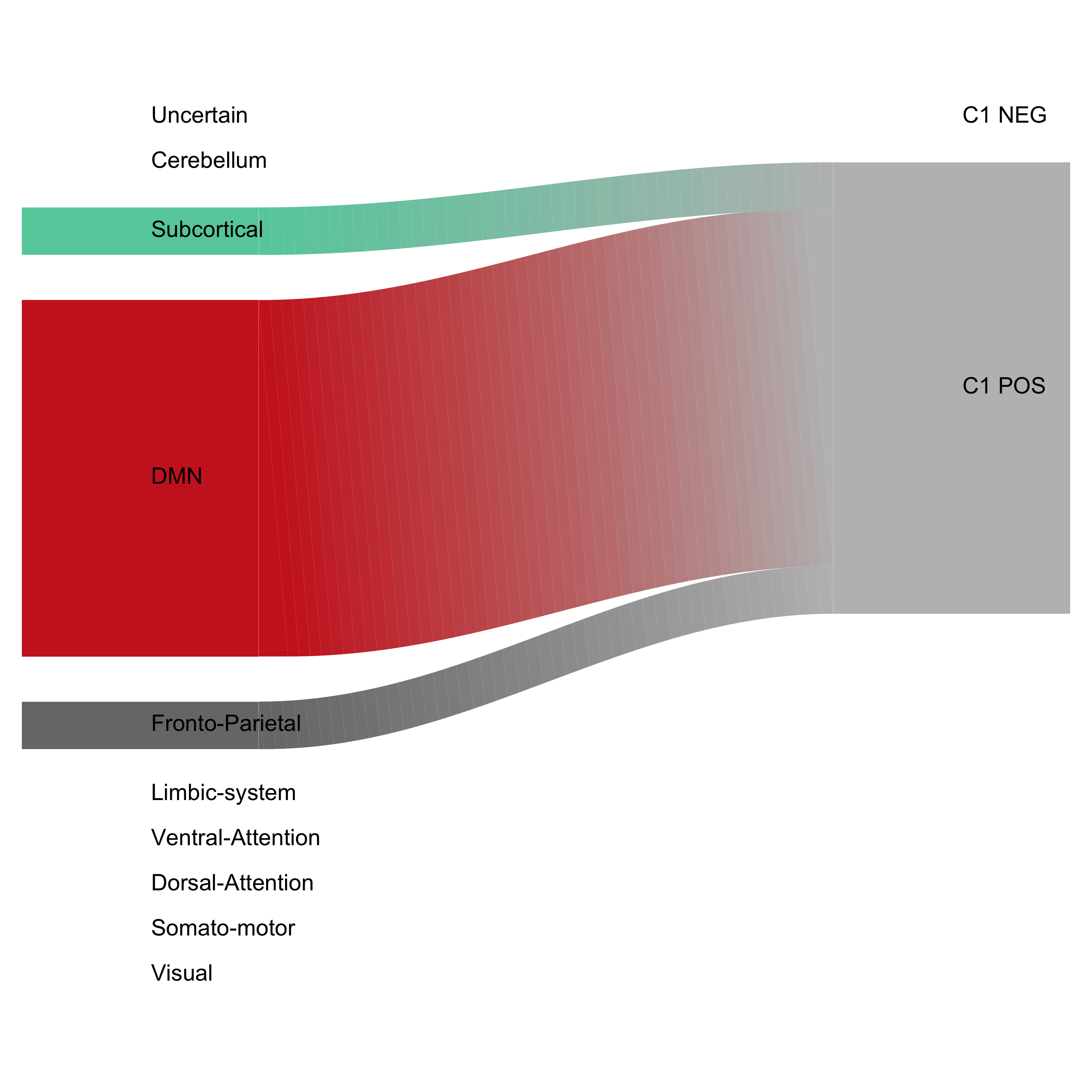}}

        \subfloat[C2-tb brain map ($\bgamma$)]{\includegraphics[width=0.25\textwidth]{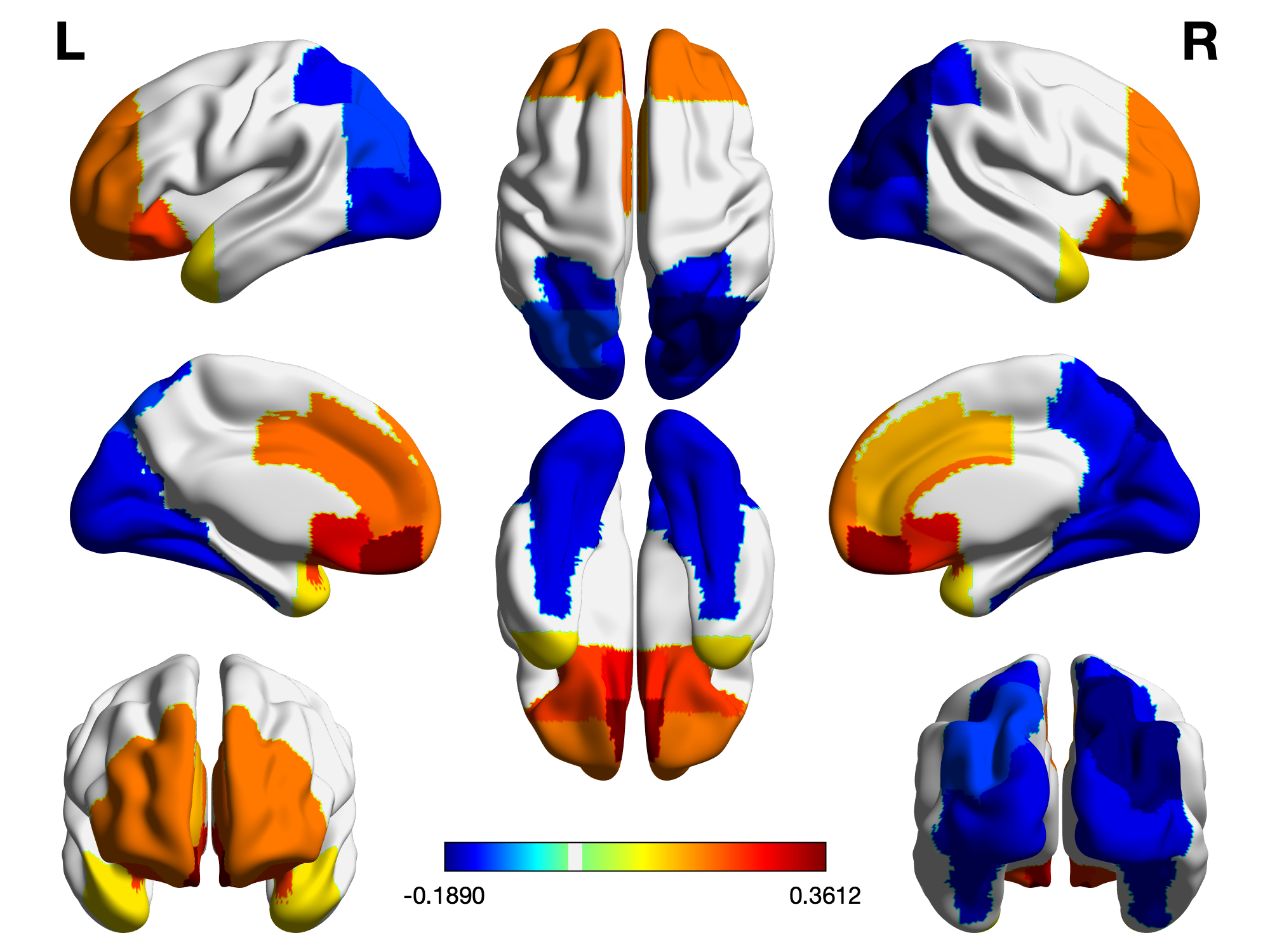}}
        \subfloat[C2-tb river plot ($\bgamma$)]{\includegraphics[width=0.25\textwidth]{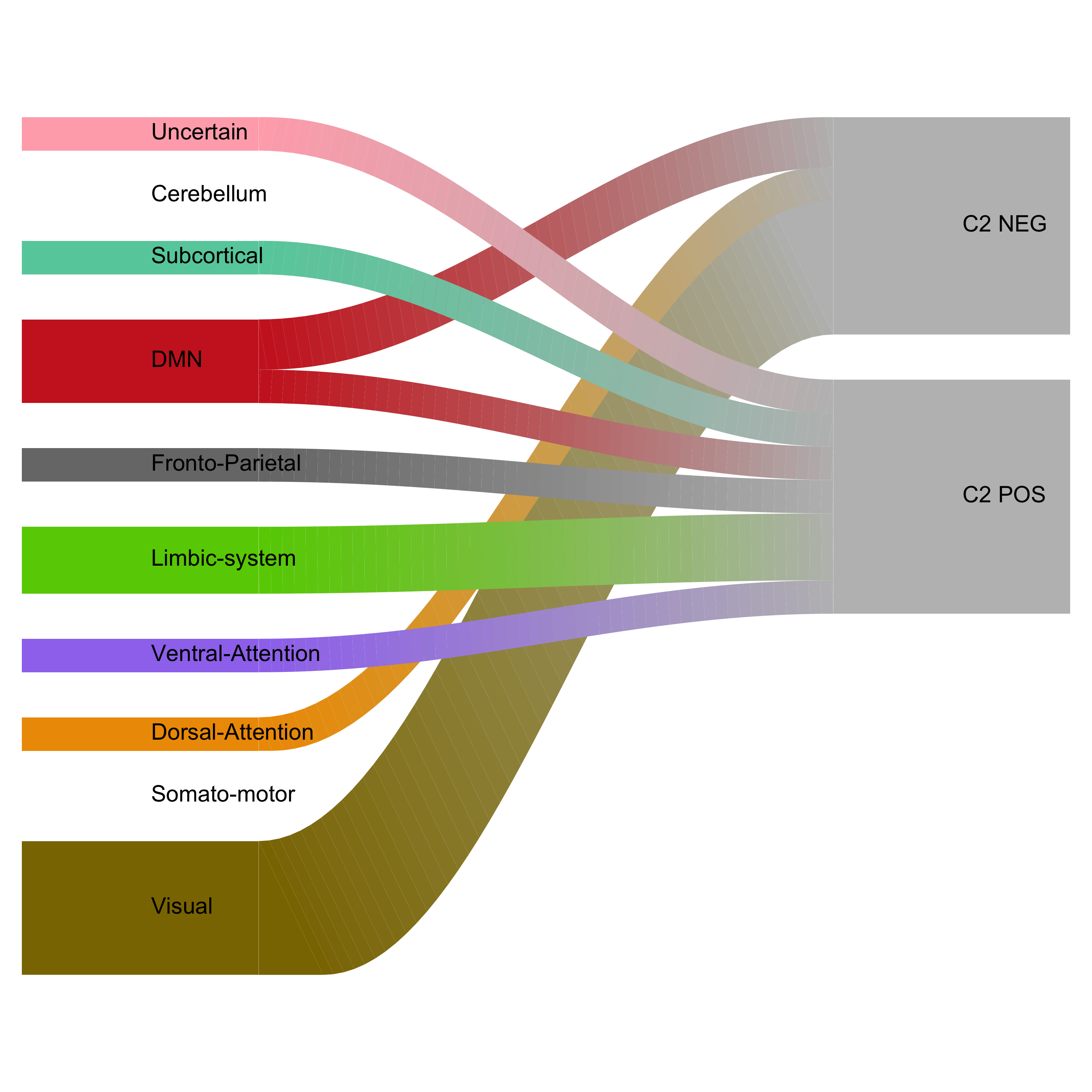}}
        \subfloat[C2-rs brain map ($\btheta$)]{\includegraphics[width=0.25\textwidth]{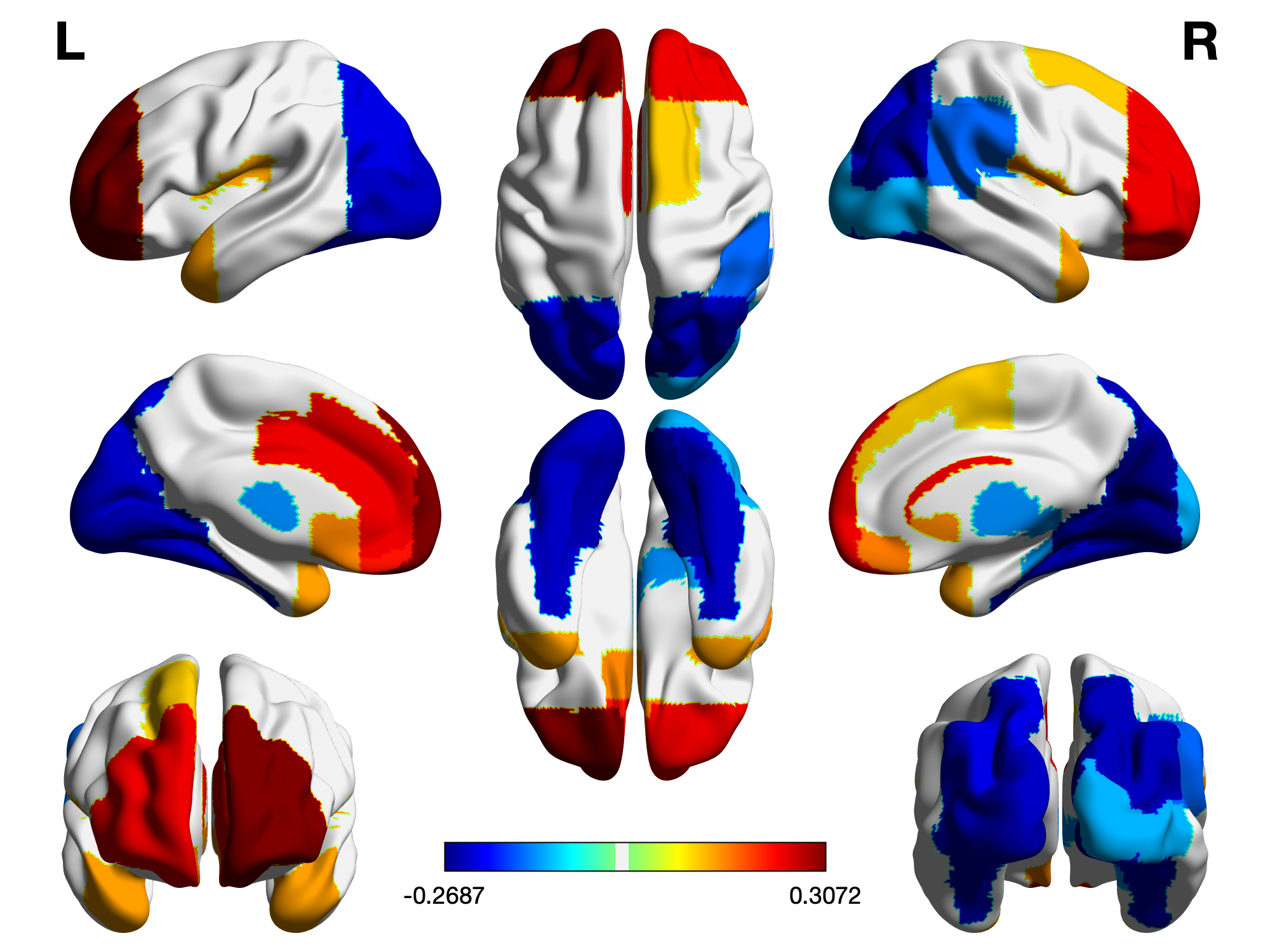}}
        \subfloat[C2-rs river plot ($\btheta$)]{\includegraphics[width=0.25\textwidth]{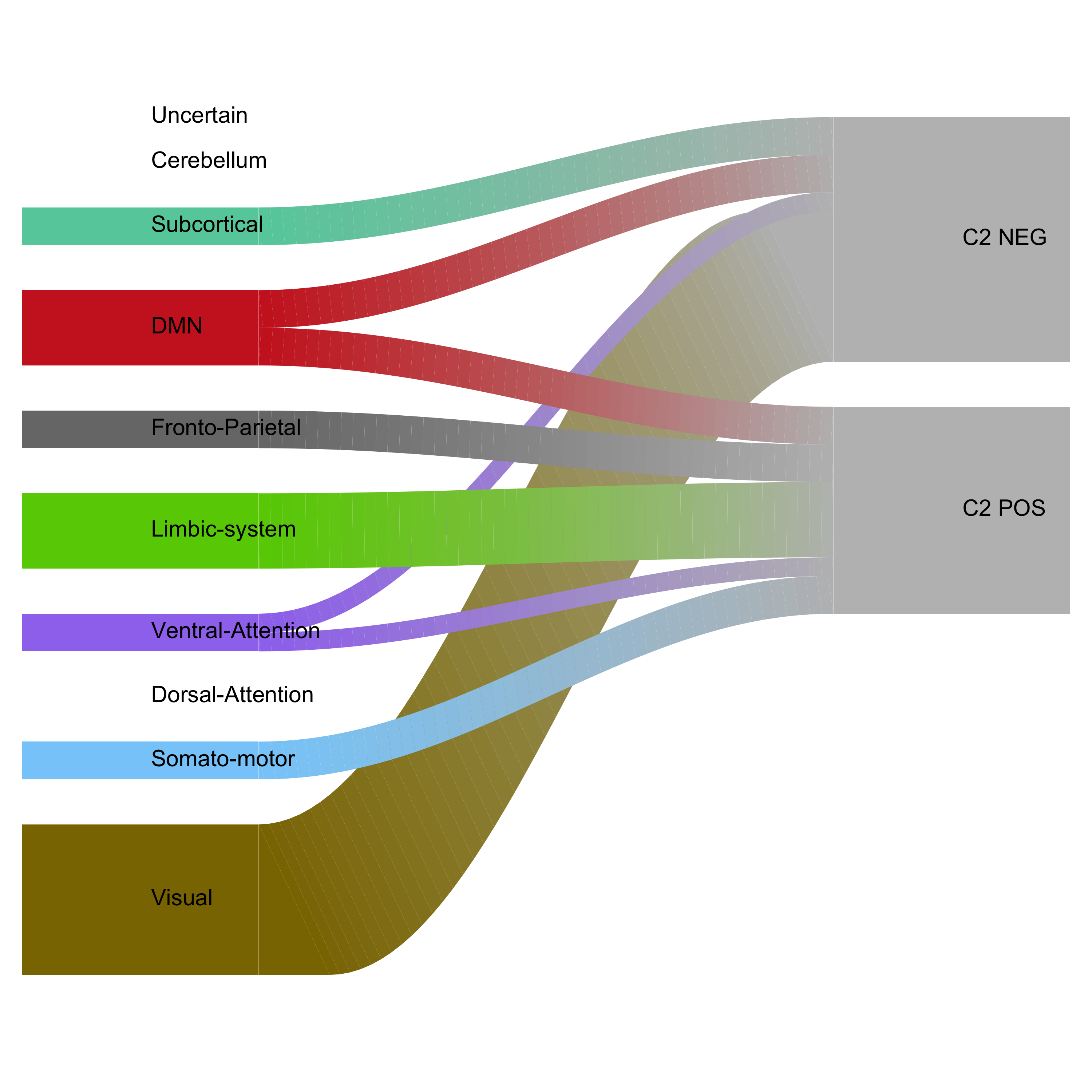}}

        \subfloat[C3-tb brain map ($\bgamma$)]{\includegraphics[width=0.25\textwidth]{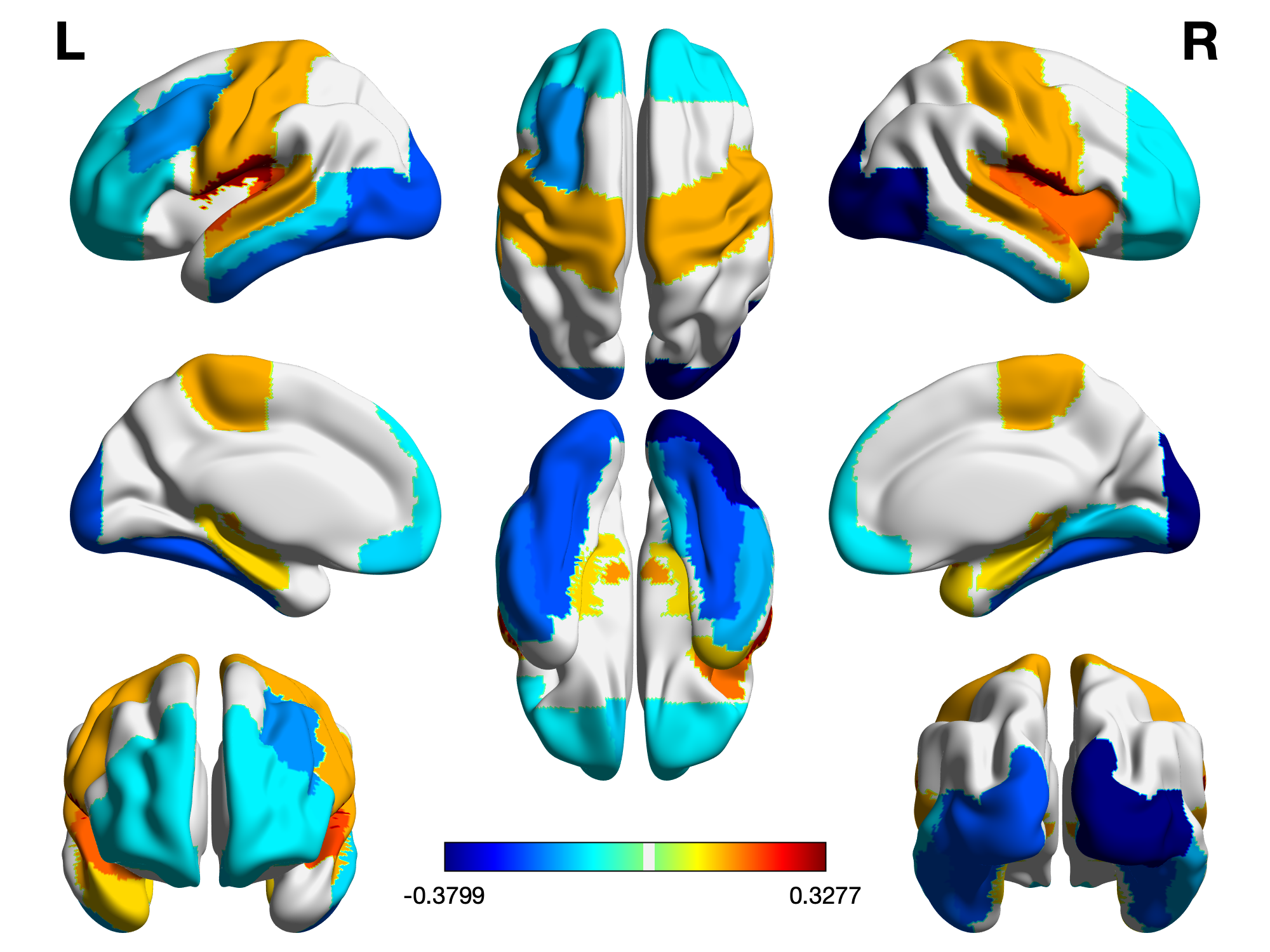}}
        \subfloat[C3-tb river plot ($\bgamma$)]{\includegraphics[width=0.25\textwidth]{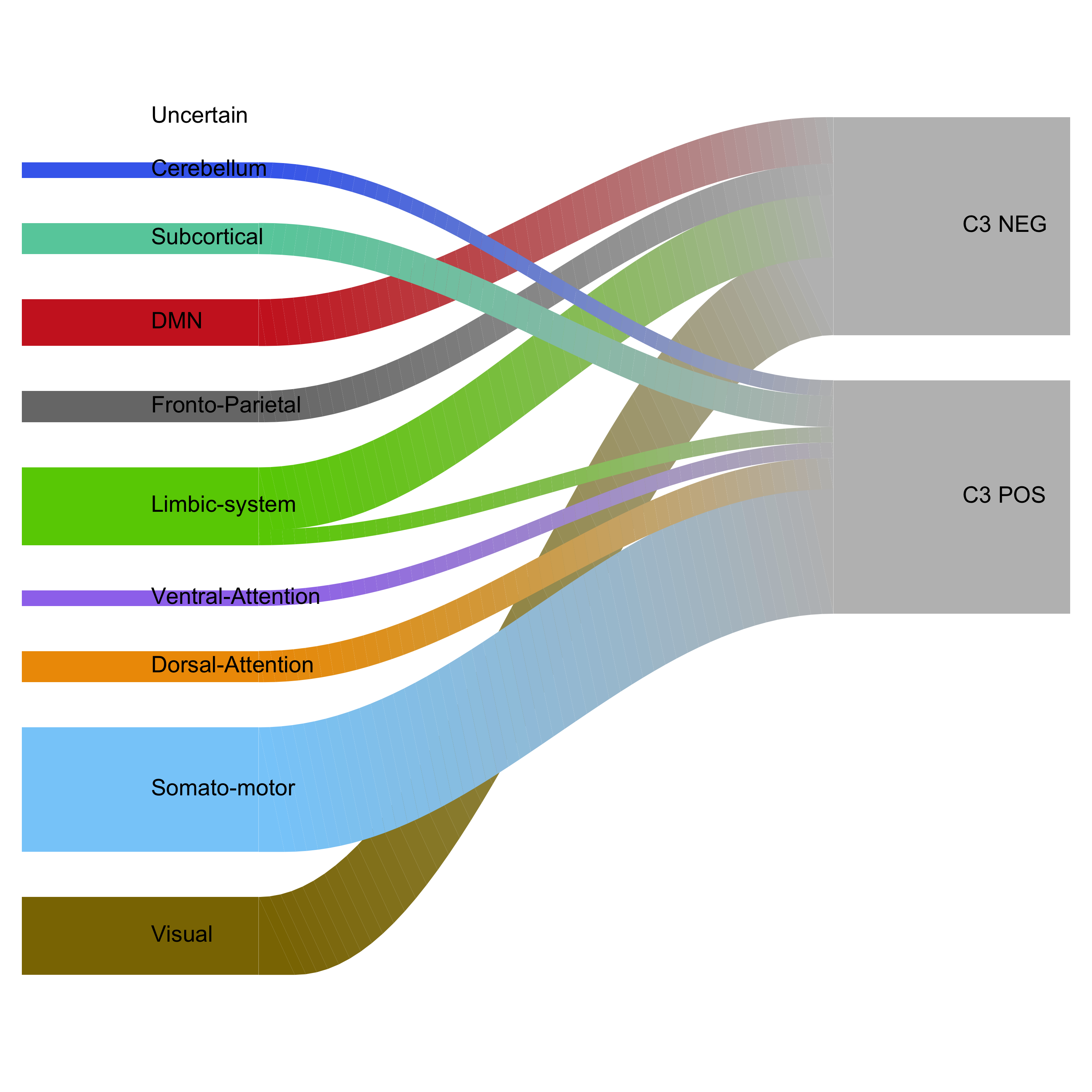}}
        \subfloat[C3-rs brain map ($\btheta$)]{\includegraphics[width=0.25\textwidth]{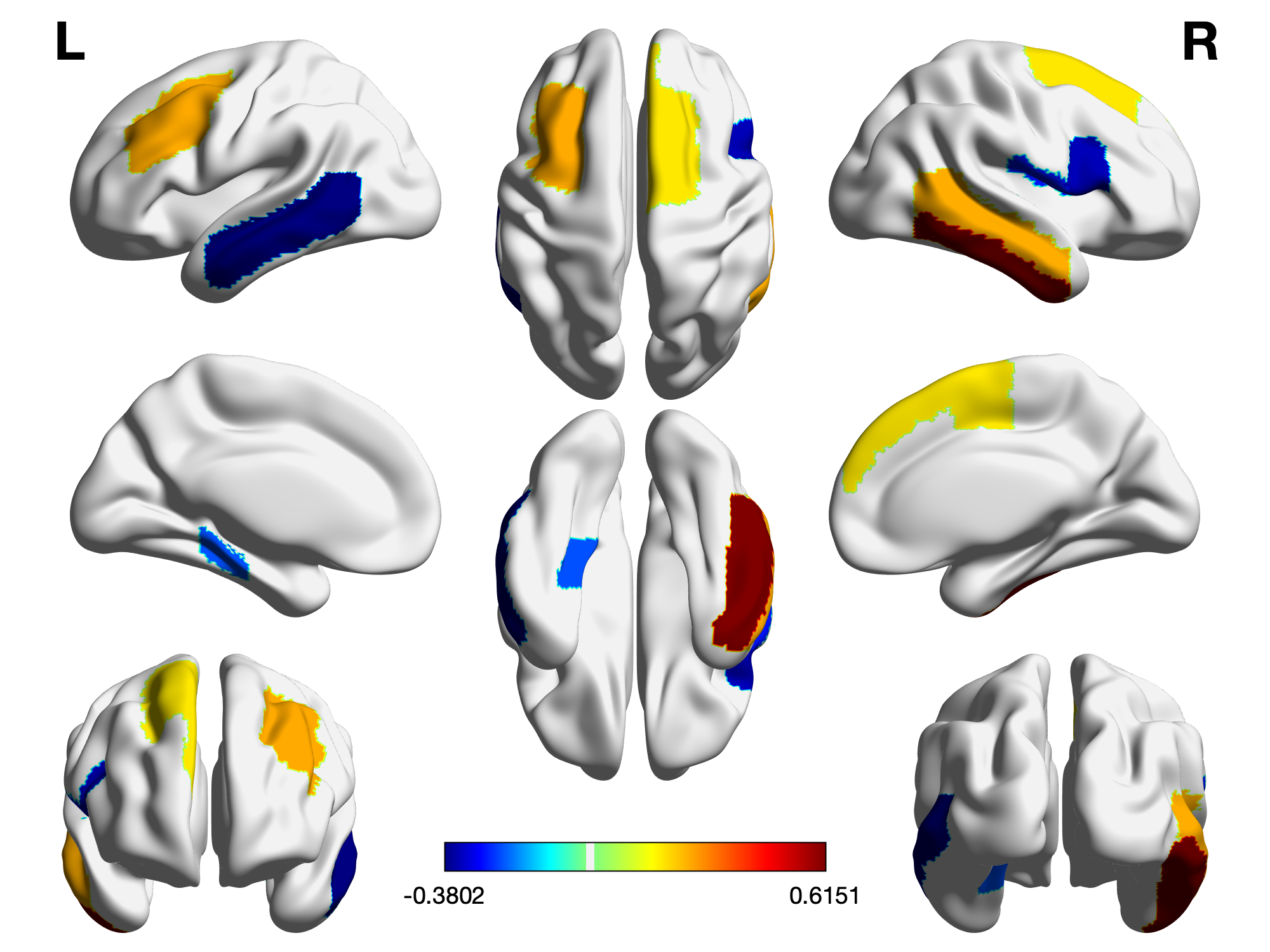}}
        \subfloat[C3-rs river plot ($\btheta$)]{\includegraphics[width=0.25\textwidth]{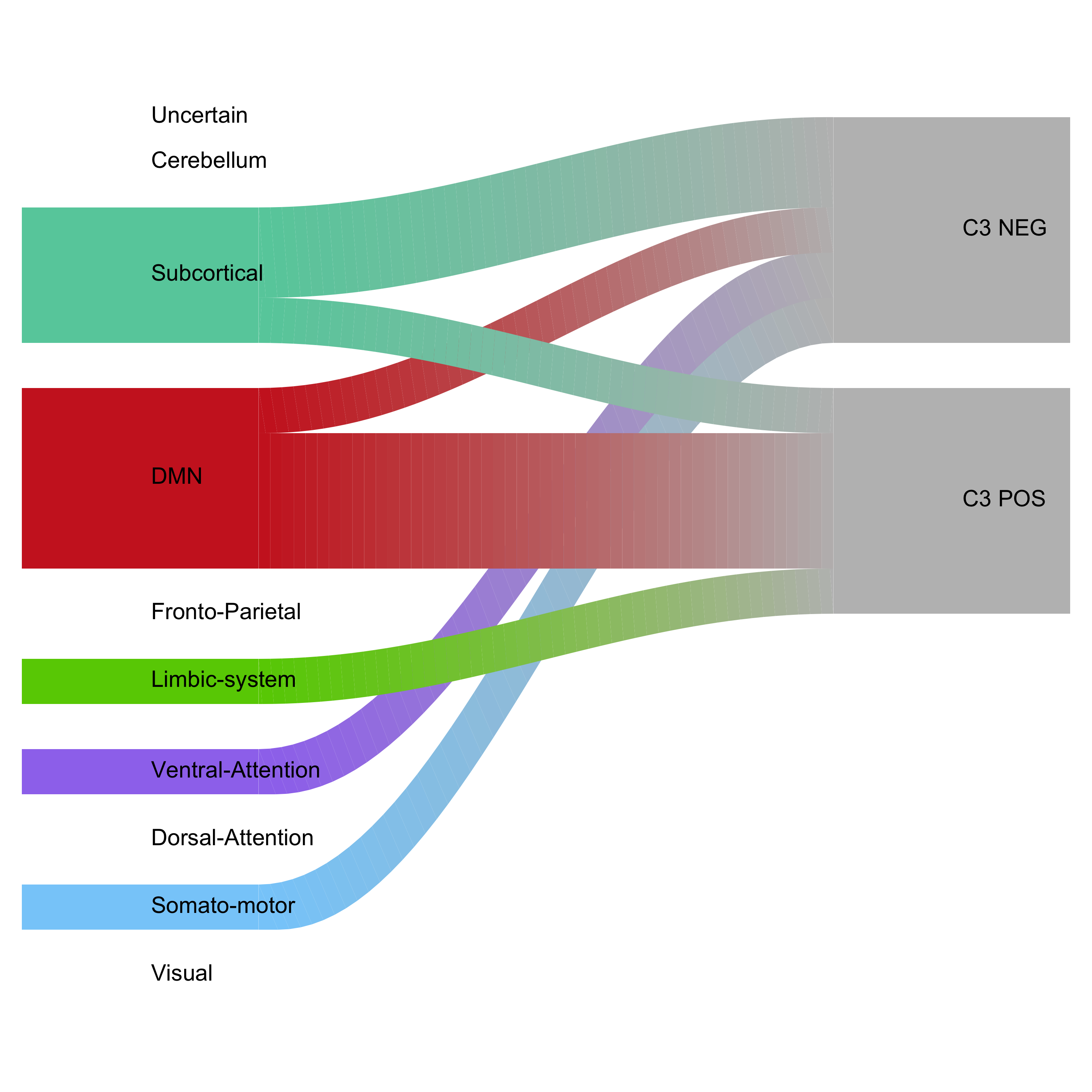}}
    \end{center}
    \caption{\label{fig:hcp}Regions with a nonzero loading in a brain map and the river plot of module configuration of the three identified components (C1, C2, and C3), using the proposed CoCReg approach in the HCP Aging study. tb: results of task-based fMRI; rs: results of resting-state fMRI.}
\end{figure}


\section{Discussion}
\label{sec:discussion}

In this study, a Covariance-on-Covariance Regression (CoCReg) model is introduced. It is assumed that there exists a pair of linear projections on the outcome covariance matrices and the predictor covariance matrices, such that in the projection spaces, a log-linear model is satisfied to associate the variances. An ordinary least squares type of estimator is introduced for simultaneous projection identification and model coefficient estimation. Under regularity conditions, the proposed estimator is asymptotically consistent. Simulation studies demonstrate the superior performance of the proposed approach over a modified existing method. Applying to data collected in the HCP Aging study, the proposed approach identifies three pairs of networks, where functional connectivity within the resting-state network can predict functional connectivity within the corresponding task-state network. The three networks correspond to a global signal network, a task-related network, and a task-unrelated network. The findings are consistent with existing knowledge about brain function and activation at rest and during a memory task.

The asymptotic consistency of the proposed estimator is achieved under the assumption of complete common diagonalization for the outcome covariance matrices and the predictor covariance matrices. In \citet{zhao2021covariate}, a complete common diagonalization was also assumed. Via simulation studies, it demonstrated that this assumption can be relaxed to partial common diagonalization. In Section~\ref{sec:sim}, a partial common diagonalization scenario is also considered and the numerical result suggests that the proposed approach is robust to this relaxation. Thus, a theoretical study of the asymptotic consistency under this relaxation is one future direction.
The proposed framework considers the low-dimensional scenario, where the dimensions of the outcome covariance matrix and the predictor covariance matrix are both lower than the number of observations acquired from each subject. Asymptotic properties are investigated under this scenario. One extension is to consider the case of high-dimensional data. In \citet{zhao2021principal}, a shrinkage estimator of the outcome covariance matrices was introduced, where the shrinkage parameters were assumed to be shared across subjects to yield optimal performance. Shrinkage estimators of the outcome covariance matrices and the predictor covariance matrices can be then considered analogously and we leave it as our future research.
A bootstrap procedure is introduced for inference on the model coefficients, not on the linear projections. As for each bootstrap sample, the identified projections and the order of identifying these projections may differ, performing inference on these projection vectors requires a matching procedure. The inference will then highly depend on the performance of matching and the metric used for matching. Thus, a more thorough theoretical and numerical investigation is necessary. As this is beyond the scope of the current manuscript, it will be studied in the future.




\appendix
\counterwithin{figure}{section}
\counterwithin{table}{section}
\counterwithin{equation}{section}
\counterwithin{lemma}{section}
\counterwithin{theorem}{section}

\section{Theory and proof}

\subsection{Details of Algorithm~\ref{alg:obj_solve}}
\label{appendix:sub:obj_solve}

This section provides the details of solving optimization problem~\eqref{eq:obj}. The loss function, $\ell$, is bi-convex over parameters $(\bgamma,\btheta,\alpha,\bbeta)$. Thus, one can solve for the solutions by coordinate descent.

For $\alpha$,
\[
    \frac{\partial\ell}{\partial\alpha}=\frac{1}{n}\sum_{i=1}^{n}2\left\{\log(\bgamma^\top\hat{\Sigma}_{i}\gamma)-\alpha\log(\btheta^\top\hat{\bDelta}_{i}\btheta)-\bbeta^\top\bw_{i}\right\}\left\{-\log(\btheta^\top\hat{\bDelta}_{i}\btheta)\right\}=0,
\]
\[
    \Rightarrow \quad \alpha=\left\{\frac{1}{n}\sum_{i=1}^{n}\log^{2}(\btheta^\top\hat{\bDelta}_{i}\btheta)\right\}^{-1}\left[\frac{1}{n}\sum_{i=1}^{n}\left\{\log(\bgamma^\top\hat{\bSigma}_{i}\bgamma)\log(\btheta^\top\hat{\bDelta}_{i}\btheta)-(\bbeta^\top\bw_{i})\log(\btheta^\top\hat{\Delta}_{i}\btheta)\right\}\right].
\]

For $\bbeta$,
\[
    \frac{\partial\ell}{\partial\bbeta}=\frac{1}{n}\sum_{i=1}^{n}2\left\{\log(\bgamma^\top\hat{\bSigma}_{i}\bgamma)-\alpha\log(\btheta^\top\hat{\bDelta}_{i}\btheta)-\bbeta^\top\bw_{i}\right\}(-\bw_{i}^\top)=\boldsymbol{\mathrm{0}},
\]
\[
    \Rightarrow \quad \bbeta=\left(\frac{1}{n}\sum_{i=1}^{n}\bw_{i}\bw_{i}^\top\right)^{-1}\left[\frac{1}{n}\sum_{i=1}^{n}\left\{\log(\bgamma^\top\hat{\bSigma}_{i}\bgamma)-\alpha\log(\btheta^\top\hat{\bDelta}_{i}\btheta)\right\}\bw_{i}\right].
\]

For $\bgamma$, let $U_{i}=\alpha\log(\btheta^\top\hat{\bDelta}_{i}\btheta)+\bbeta^\top\bw_{i}$. Under the constraint, the Lagrangian form is
\[
    \mathcal{L}(\bgamma)=\frac{1}{n}\sum_{i=1}^{n}\left\{\log(\bgamma^\top\hat{\bSigma}_{i}\bgamma)-U_{i}\right\}^{2}-\lambda_{1}(\bgamma^\top\bH_{y}\bgamma-1)
\]
\begin{eqnarray*}
    \frac{\partial\mathcal{L}}{\partial\bgamma} &=& \frac{2}{n}\sum_{i=1}^{n}\left\{\log(\bgamma^\top\hat{\bSigma}_{i}\bgamma)-U_{i}\right\}\frac{2\hat{\bSigma}_{i}\bgamma}{\bgamma^\top\hat{\bSigma}_{i}\bgamma}-2\lambda_{1}\bH_{y}\bgamma=\boldsymbol{\mathrm{0}}, \\
    \frac{\partial\mathcal{L}}{\partial\lambda_{1}} &=& \bgamma^\top\bH_{y}\bgamma-1=0.
\end{eqnarray*}
Plugging in $\bgamma$ from previous step $h$ into $\bgamma^\top\hat{\bSigma}_{i}\bgamma$ denoted as $\xi_{i}=\bgamma^{(h)\top}\hat{\bSigma}_{i}\bgamma^{(h)}$,
\[
    \Rightarrow \quad \left(\frac{2}{n}\sum_{i=1}^{n}\frac{\log\xi_{i}-U_{i}}{\xi_{i}}\hat{\bSigma}_{i}\right)\bgamma-\lambda_{1}\bH_{y}\bgamma\triangleq\bA_{1}\bgamma-\lambda_{1}\bH_{y}\bgamma=\boldsymbol{\mathrm{0}},
\]
where
\[
    \bA_{1}=\frac{2}{n}\sum_{i=1}^{n}\frac{\log\xi_{i}-U_{i}}{\xi_{i}}\hat{\bSigma}_{i}.
\]
The solution $(\bgamma,\lambda_{1})$ is the eigenvector and eigenvalue of $\bA_{1}$ with respect to $\bH_{y}$~\cite[see details in the supplementary mateirals of ][]{zhao2021covariate}.

For $\btheta$, analogous to the solution to $\bgamma$, let $V_{i}=\log(\bgamma^\top\hat{\bSigma}_{i}\bgamma)-\bbeta^\top\bw_{i}$. Under the constraint, the Lagrangian form is
\[
    \mathcal{L}(\btheta)=\frac{1}{n}\sum_{i=1}^{n}\left\{\alpha\log(\btheta^\top\hat{\bDelta}_{i}\btheta)-V_{i}\right\}^{2}-\lambda_{2}(\btheta^\top\bH_{x}\btheta-1).
\]
\begin{eqnarray*}
    \frac{\partial\mathcal{L}}{\partial\btheta} &=& \frac{2}{n}\sum_{i=1}^{n}\left\{\alpha\log(\btheta^\top\hat{\bDelta}_{i}\btheta)-V_{i}\right\}\frac{2\alpha\hat{\bDelta}_{i}\btheta}{\btheta^\top\hat{\bDelta}_{i}\btheta}-2\lambda_{2}\bH_{x}\btheta=\boldsymbol{\mathrm{0}} \\
    \frac{\partial\mathcal{L}}{\partial\lambda_{2}} &=& \btheta^\top\bH_{x}\btheta-1=0
 \end{eqnarray*}
 Plugging in $\btheta$ from previous step $h$ into $\btheta^\top\hat{\bDelta}_{i}\btheta$ denoted as $\zeta_{i}=\btheta^{(h)\top}\hat{\bDelta}_{i}\btheta^{(h)}$,
 \[
    \Rightarrow\quad \left(\frac{2\alpha}{n}\sum_{i=1}^{n}\frac{\alpha\log\zeta_{i}-V_{i}}{\zeta_{i}}\hat{\bDelta}_{i}\right)\btheta-\lambda_{2}\bH_{x}\btheta\triangleq\bA_{2}\btheta-\lambda_{2}\bH_{x}\btheta=\boldsymbol{\mathrm{0}},
 \] 
 where
 \[
    \bA_{2}=\frac{2\alpha}{n}\sum_{i=1}^{n}\frac{\alpha\log\zeta_{i}-V_{i}}{\zeta_{i}}\hat{\bDelta}_{i}.
 \]
 The solution $(\btheta,\lambda_{2})$ is the eigenvector and eigenvalue of $\bA_{2}$ with respect to $\bH_{x}$.

\subsection{Asymptotic properties}
\label{appendix:sub:asmp}

Theorem~\ref{thm:asmp} presents the asymptotic distribution of the proposed OLS estimator of $(\alpha,\bbeta)$ when the projections are known.

\begin{theorem}\label{thm:asmp}
    Assume Assumptions A1--A4 in Section~\ref{sub:asmp} hold. For given $(\bgamma,\btheta)$, assume
    \begin{equation}
        \frac{1}{n}\sum_{i=1}^{n}\log^{2}\left\{\btheta^\top\left(\frac{1}{u_{i}}\sum_{s=1}^{u_{i}}\bx_{is}
        \bx_{is}^\top\right)\btheta\right\}\rightarrow G_{x}\in\mathbb{R}, \quad \text{as } n\rightarrow\infty, ~u\rightarrow\infty,
    \end{equation}
    \begin{equation}
        \frac{1}{n}\sum_{i=1}^{n}\bw_{i}\bw_{i}^\top\rightarrow \bQ_{w}\in\mathbb{R}^{r\times r}, \quad \text{as } n\rightarrow\infty,
    \end{equation}
    \begin{equation}
        \frac{1}{n}\sum_{i=1}^{n}\log\left\{\btheta^\top\left(\frac{1}{u_{i}}\sum_{s=1}^{u_{i}}\bx_{is}\bx_{is}^\top\right)\btheta\right\}\bw_{i}\rightarrow \bH_{xw}\in\mathbb{R}^{r\times 1}, \quad \text{as } n\rightarrow\infty, ~u\rightarrow\infty.
    \end{equation}
    Let $M_{n}=\sum_{i=1}^{n}v_{i}$, as $n,u,v\rightarrow\infty$,
    \begin{equation}\label{eq:asmp_coef}
        \sqrt{M_{n}}\left(\begin{pmatrix}
            \hat{\alpha} \\
            \hat{\bbeta}
        \end{pmatrix}-\begin{pmatrix}
            \alpha \\
            \bbeta
        \end{pmatrix}\right)\overset{\mathcal{D}}{\longrightarrow}\mathcal{N}\left(\boldsymbol{\mathrm{0}},\begin{pmatrix}
            G_{x} & \bH_{xw}^\top \\
            \bH_{xw} & \bQ_{w}
        \end{pmatrix}^{-1}\right).
    \end{equation}
\end{theorem}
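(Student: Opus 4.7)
When $(\bgamma, \btheta)$ are treated as known, the objective in~\eqref{eq:obj} reduces to an ordinary linear least-squares problem with response $\hat{y}_i = \log(\bgamma^\top\hat{\bSigma}_i\bgamma)$ and regressor vector $\bz_i = (\log(\btheta^\top\hat{\bDelta}_i\btheta),\, \bw_i^\top)^\top \in \mathbb{R}^{r+1}$. Writing $\epsilon_i^y = \log(\bgamma^\top\hat{\bSigma}_i\bgamma) - \log(\bgamma^\top\bSigma_i\bgamma)$ and $\epsilon_i^x = \log(\btheta^\top\hat{\bDelta}_i\btheta) - \log(\btheta^\top\bDelta_i\btheta)$, Assumption A4 together with Model~\eqref{eq:model} gives $\hat{y}_i = \bz_i^\top(\alpha, \bbeta^\top)^\top + (\epsilon_i^y - \alpha\epsilon_i^x)$, so the OLS identity yields
\begin{equation*}
    \begin{pmatrix} \hat{\alpha} - \alpha \\ \hat{\bbeta} - \bbeta \end{pmatrix} = \left(\frac{1}{n}\sum_{i=1}^{n} \bz_i \bz_i^\top\right)^{-1} \left(\frac{1}{n}\sum_{i=1}^{n} \bz_i (\epsilon_i^y - \alpha\epsilon_i^x)\right).
\end{equation*}
The plan is to show (i) the design matrix converges to the limit stated in the theorem, (ii) a delta-method expansion of the logarithms identifies the leading stochastic term of the score, and (iii) a Lindeberg--Feller CLT for triangular arrays governs the score, after which Slutsky's theorem delivers~\eqref{eq:asmp_coef}.

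\textbf{Key steps.} The design-matrix convergence $\frac{1}{n}\sum_i \bz_i \bz_i^\top \to \bigl(\begin{smallmatrix} G_x & \bH_{xw}^\top \\ \bH_{xw} & \bQ_w \end{smallmatrix}\bigr)$ follows directly from the three hypothesis limits in the theorem together with the continuous mapping theorem. For the score, I would apply a first-order Taylor expansion: under Assumptions A3--A4 the common eigenvectors $\bgamma = \bpi_k$ and $\btheta = \bupsilon_j$ give $\bgamma^\top\bSigma_i\bgamma = \lambda_{ik}$ and $\btheta^\top\bDelta_i\btheta = \omega_{ij}$, and
\begin{equation*}
    \epsilon_i^y = \frac{\bgamma^\top(\hat{\bSigma}_i - \bSigma_i)\bgamma}{\lambda_{ik}} + O_p(v_i^{-1}), \qquad \epsilon_i^x = \frac{\btheta^\top(\hat{\bDelta}_i - \bDelta_i)\btheta}{\omega_{ij}} + O_p(u_i^{-1}).
\end{equation*}
Assumption A2 bounds the fourth moments of the projected coordinates $\zeta_{itk}$ and $\kappa_{isj}$, so a CLT for the scalar sample-variance quadratic forms yields $\sqrt{v_i}\,\epsilon_i^y$ and $\sqrt{u_i}\,\epsilon_i^x$ with bounded variance. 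Because the cross-sectional summands $\bz_i(\epsilon_i^y - \alpha\epsilon_i^x)$ are independent and conditionally mean zero given the covariance structures, a Lindeberg--Feller CLT applies to $\sqrt{M_n}\cdot n^{-1}\sum_i \bz_i(\epsilon_i^y - \alpha\epsilon_i^x)$; the Lindeberg condition follows from A1 (fixed $p, q$, so $\bz_i$ is dimensionally bounded) and the fourth-moment control in A2. Slutsky's theorem then combines the design-matrix limit with the asymptotically Gaussian score.

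\textbf{Main obstacle.} The principal difficulty is bookkeeping the two distinct sources of randomness --- the within-subject averaging over $u_i$ and $v_i$ time points and the cross-sectional averaging over $n$ subjects --- under the single rate $\sqrt{M_n} = \sqrt{\sum_i v_i}$. The regressor $\bz_i$ itself depends on $\hat{x}_i$, so one must separately verify that $\frac{1}{n}\sum_i \bz_i \bz_i^\top$ equals its population-plug-in version $\frac{1}{n}\sum_i \bz_i^{*}\bz_i^{*\top}$ (with $\bz_i^{*} = (\log(\btheta^\top\bDelta_i\btheta), \bw_i^\top)^\top$) up to an $o_p(1)$ term, and analogously in the score, so that the plug-in limits $G_x$ and $\bH_{xw}$ in the theorem's hypotheses can be transferred to the observed regressor. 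The Taylor remainders $O_p(v_i^{-1})$ and $O_p(u_i^{-1})$, once multiplied by $\sqrt{M_n}/n$ and summed, must also be shown to vanish --- this is the step that pins down the joint rate requirement on $n, u, v$. Finally, collapsing the sandwich variance that naturally emerges from the CLT into the simplified inverse form displayed in~\eqref{eq:asmp_coef} requires matching the conditional variance of $\epsilon_i^y - \alpha\epsilon_i^x$, after scaling by $\sqrt{v_i}$, against the design-matrix limit, which is where the remaining technical work concentrates.
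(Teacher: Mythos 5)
Your plan follows the same route as the paper's own proof---treat the problem, for fixed $(\bgamma,\btheta)$, as ordinary least squares in $(\alpha,\bbeta)$ with regressor $\bz_i=(\log(\btheta^\top\hat{\bDelta}_i\btheta),\bw_i^\top)^\top$, take the limit of the averaged Hessian, and invoke a central limit theorem for the score---but it is considerably more explicit than what the paper actually writes. The published proof consists entirely of computing the three second-derivative blocks $\log^2(\btheta^\top\hat{\bDelta}_i\btheta)$, $\bw_i\bw_i^\top$ and $\log(\btheta^\top\hat{\bDelta}_i\btheta)\bw_i^\top$, restating the three hypothesized limits, and declaring that the asymptotic distribution ``follows''; the delta-method expansion of the residual $\epsilon_i^y-\alpha\epsilon_i^x$, the Lindeberg--Feller argument for the triangular array, and the rate bookkeeping between $n$, $u_i$, $v_i$ that you lay out appear nowhere in it. The ``main obstacle'' you isolate is the genuinely substantive point: for the sandwich variance to collapse to the bare inverse in~\eqref{eq:asmp_coef}, the asymptotic variance of $\sqrt{v_i}(\epsilon_i^y-\alpha\epsilon_i^x)$ must equal one, and this is precisely the step the paper never verifies. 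Indeed it is not obviously true---under normality $\mathrm{Var}\{\sqrt{v_i}\,\log(\bgamma^\top\hat{\bSigma}_i\bgamma)\}\rightarrow 2$, and the $\epsilon_i^x$ term contributes additionally at a rate depending on $v_i/u_i$, so the stated covariance appears to omit a variance factor unless some further normalization or assumption is intended. In short, your proposal is a faithful and more careful version of the intended argument, and the difficulty you flag as remaining is a real one that the paper's proof glosses over rather than resolves.
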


\begin{proof}
    Let
    \[
        \ell_{i}=\left\{\log(\bgamma^\top\hat{\bSigma}_{i}\bgamma)-\alpha\log(\btheta^\top\hat{\bDelta}_{i}\btheta)-\bbeta^\top\bw_{i}\right\}^{2}.
    \]
    \begin{eqnarray*}
        \frac{\partial^{2}\ell_{i}}{\partial\alpha^{2}} &=& \log^{2}(\btheta^\top\hat{\bDelta}_{i}\btheta), \\
        \frac{\partial^{2}\ell_{i}}{\partial\bbeta\partial\bbeta^\top} &=& \bw_{i}\bw_{i}^\top, \\
        \frac{\partial^{2}\ell_{i}}{\partial\alpha\partial\bbeta^\top} &=& \log(\btheta^\top\hat{\bDelta}_{i}\btheta)\bw_{i}^\top.
    \end{eqnarray*}
    \[
        \frac{1}{n}\sum_{i=1}^{n}\log^{2}\left\{\btheta^\top\left(\frac{1}{u_{i}}\sum_{s=1}^{u_{i}}\bx_{is}
        \bx_{is}^\top\right)\btheta\right\}\rightarrow G_{x}\in\mathbb{R}, \quad \text{as } n\rightarrow\infty, u\rightarrow\infty,
    \]
    \[
        \frac{1}{n}\sum_{i=1}^{n}\bw_{i}\bw_{i}^\top\rightarrow \bQ\in\mathbb{R}^{r\times r}, \quad \text{as } n\rightarrow\infty,
    \]
    \[
        \frac{1}{n}\sum_{i=1}^{n}\log\left\{\btheta^\top\left(\frac{1}{u_{i}}\sum_{s=1}^{u_{i}}\bx_{is}\bx_{is}^\top\right)\btheta\right\}\bw_{i}\rightarrow \bH_{xw}\in\mathbb{R}^{r\times 1}, \quad \text{as } n\rightarrow\infty, ~u\rightarrow\infty.
    \]
    The asymptotic distribution in~\eqref{eq:asmp_coef} follows.
\end{proof}

\subsection{Proof of Proposition~\ref{prop:asmp}}
\label{appendix:sub:thm_asmp}

We first discuss the imposed assumptions. Assumption A1 assumes a low-dimensional scenario and the data dimensions are fixed. Assumption A2 regulates the data on higher-order moments. Usually, for resting-state data, normality can be assumed. Then, for $\bx_{is}$'s, this assumption is satisfied. For task-based data, the task-related activation is generally modeled as a convolution of the canonical hemodynamic response function (HRF) and the event onsite~\citep{lindquist2008statistical}. The fMRI signal can be considered as a linear superposition of the task-free random fluctuation and the task-related activation~\citep{cole2014intrinsic}. Thus, this higher-order moment constraint can be assumed to be satisfied. Assumption A3 assumes a common diagonalization for the covariance matrices, through the corresponding eigenvalue can be different. Assumption A4 assumes that the model is correctly specified. Under Assumptions A1--A3, the eigenvectors of $\bar{\bS}_{y}$ and $\bar{\bS}_{x}$ are consistent estimators of $\bPi$ and $\bUpsilon$, respectively.
The proposed estimators are $M$-estimators. Under regularity conditions (A1)--(A4), the consistency of the estimators follows.


\section{Additional simulation results}
\label{appendix:sec:sim}

\subsection{Non-Gaussian distributed data}
\label{appendix:sub:sim_nonGaussian}

In this section, the performance of the proposed approach is examined when the data distribution is non-Gaussian. Two multivariate distributions are considered: (1) multivariate $t$-distribution with degrees of freedom $\nu=3$ ($\text{skewness}=0$) and (2) matrix gamma distribution with shape parameter $\alpha=1$ ($\text{skewness}\neq 0$). For the multivariate $t$-distribution, the covariance matrices are generated following the same procedure as in Section~\ref{sec:sim}. For the matrix gamma distribution, the covariance matrices are first generated following the settings in Section~\ref{sec:sim}. Denote the covariance matrix, $\bDelta_{i}=(\delta_{ijk})$ and $\bSigma_{i}=(\sigma_{ilm})$. The shape parameter is set to one and the scale parameters are $\sqrt{\delta_{ijj}}$ for ($j=1,\dots,p$) and $\sqrt{\sigma_{ill}}$ (for $l=1,\dots,q$) and , such that the corresponding variances are $\delta_{ijj}$ and $\sigma_{ill}$, respectively. The correlation structure is set to be the corresponding correlation matrix of $\bDelta_{i}$ and $\bSigma_{i}$. For gamma distributed data, the expectation is nonzero. Thus, before applying the proposed approach, the data are centered to have mean zero. Table~\ref{appendix:table:sim_nonGaussian} presents the results with $p=10$ and $q=5$ under the sample size of $(n,u,v)=(100,100,100)$. From the table, for both multivariate $t$ and matrix gamma distributions, the proposed approach yields a good estimate of the parameters with a higher estimation bias compared to the results under the Gaussian distributions (Table~\ref{table:sim_est}). The proposed estimator is an OLS type estimator, where no distribution assumption is imposed. The simulation results demonstrate the robustness of the proposed estimator to non-Gaussian distributions.

\begin{table}
    \caption{\label{appendix:table:sim_nonGaussian}Performance in identifying target components and estimating model coefficients for non-Gaussian data. Data dimension of $p=10$ and $q=5$ and sample size of $(n,u,v)=(100,100,100)$. SE: standard error; MSE: mean squared error.}
    \begin{center}
        \resizebox{\textwidth}{!}{
        \begin{tabular}{l l r r r r r c r r r}
            \hline
            & & & & \multicolumn{3}{c}{$\hat{\alpha}$} && \multicolumn{3}{c}{$\hat{\beta}_{1}$} \\
            \cline{5-7}\cline{9-11}
            \multicolumn{1}{c}{\multirow{-2}{*}{Distribution}} & & \multicolumn{1}{c}{\multirow{-2}{*}{$|\langle\hat{\bgamma},\bgamma\rangle|$ (SE)}} & \multicolumn{1}{c}{\multirow{-2}{*}{$|\langle\hat{\btheta},\btheta\rangle|$ (SE)}} & \multicolumn{1}{c}{Bias} & \multicolumn{1}{c}{SE} & \multicolumn{1}{c}{MSE} && \multicolumn{1}{c}{Bias} & \multicolumn{1}{c}{SE} & \multicolumn{1}{c}{MSE} \\
            \hline
            & C1 & $0.966$ ($0.085$) & $0.854$ ($0.137$) & $-0.172$ & $0.936$ & $0.897$ && $0.128$ & $0.200$ & $0.056$ \\
            \multirow{-2}{*}{Multivariate $t$ ($\nu=3$)} & C2 & $0.958$ ($0.075$) & $0.771$ ($0.150$) & $0.093$ & $1.459$ & $2.114$ && $-0.320$ & $0.306$ & $0.195$ \\
            \hline
            & C1 & $0.978$ ($0.072$) & $0.949$ ($0.072$) & $-0.277$ & $0.191$ & $0.113$ && $0.071$ & $0.140$ & $0.025$ \\
            \multirow{-2}{*}{Matrix gamma} & C2 & $0.942$ ($0.101$) & $0.891$ ($0.125$) & $-0.465$ & $0.352$ & $0.340$ && $-0.347$ & $0.333$ & $0.230$ \\
            \hline
        \end{tabular}
        }
    \end{center}
\end{table}


\section{Additional results of the HCP Aging study}
\label{appendix:sec:fmri}

\subsection{Validity of model assumptions}
\label{appendix:subsec:fmri_assumption}

In this section, the validity of model assumptions imposed in Section~\ref{sub:asmp} is examined. In the HCP Aging study data, the number of observations of each subject is $u_{i}=u=478$ in the resting-state data and the number of observations of each subject is $v_{i}=v=335$ in the task-based data. The data dimensions are $p=q=75$. The sample size is $n=551$. Thus, Assumption A1 is satisfied. As discussed in Section~\ref{appendix:sub:thm_asmp} above, Assumption A2 is valid. Assumption A3 assumes common eigenstructures across covariance matrices. Here, we provide an empirical examination. First, the average sample covariance matrices, $\bar{\bS}_{y}$ and $\bar{\bS}_{x}$, are calculated and the eigenvectors are obtained. Second, eigenvectors of $\hat{\bSigma}_{i}$ and $\hat{\bDelta}_{i}$ of each subject are calculated. Correlations between individual eigenvectors and average eigenvectors are then calculated as a similarity metric. When the magnitude of the correlation is greater than $0.5$, it is considered as a high similarity allowing variability and bias in sample eigenvectors. In both $\bSigma_{i}$ and $\bDelta_{i}$, the first eigenvector is common across over $98\%$ of the subjects. The second to the fourth eigenvectors are common over $40\%$ of the subjects. This suggests that at least the partial common diagonalization assumption is satisfied in this dataset. Based on our simulation results in Section~\ref{sec:sim}, the proposed approach can identify the target components of interest. For Assumption A4 that the model is correctly specified, it is challenging to validate using data alone. The considered model is based on the domain knowledge and the study interest.

\subsection{Additional results}
\label{appendix:subsec:fmri_results}

Figure~\ref{appendix:fig:hcp} presents the sparsified loading profile of the identified components (C1, C2, and C3). Local smoothness and consistency is imposed within each brain functional module by using the fused lasso penalty~\citep{tibshirani2005sparsity}.
Figure~\ref{appendix:fig:hcp_score} presents the scatter plot of the modeled data after projection and the fitted regression line. From the figures, the linear association between the task-based functional connectivity within the network and the resting-state functional connectivity within the network is observed.

\begin{figure}
    \begin{center}
        \subfloat[C1-tb loading ($\bgamma$)]{\includegraphics[width=0.5\textwidth]{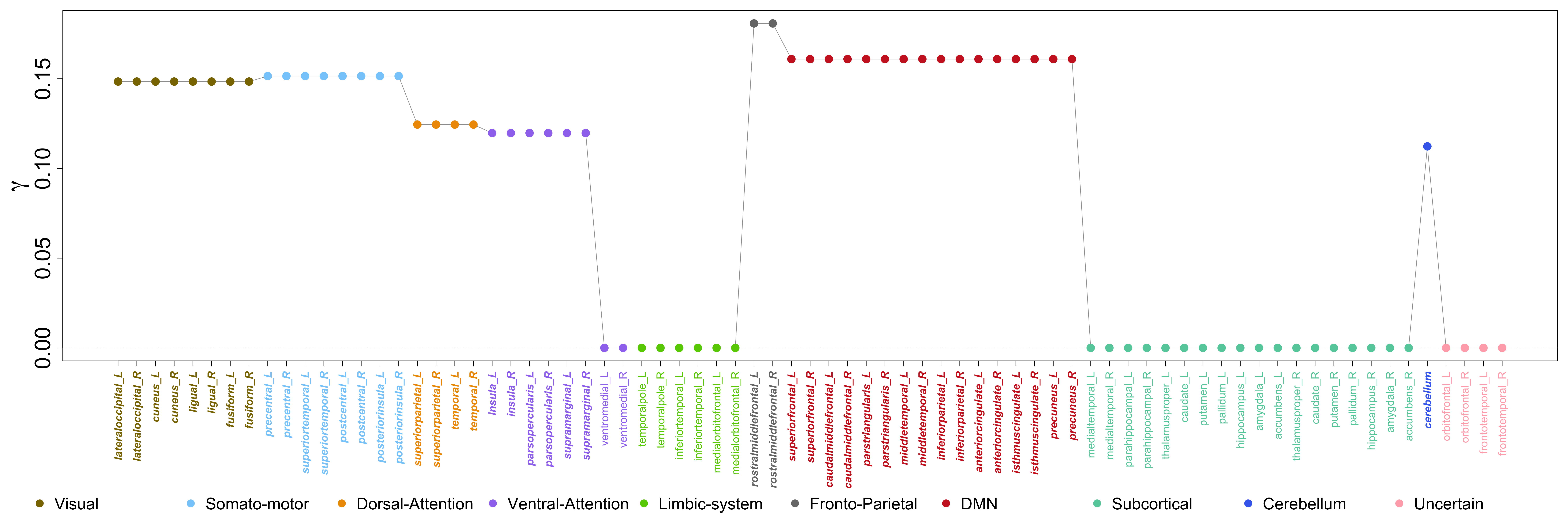}}
        \subfloat[C1-rs loading ($\btheta$)]{\includegraphics[width=0.5\textwidth]{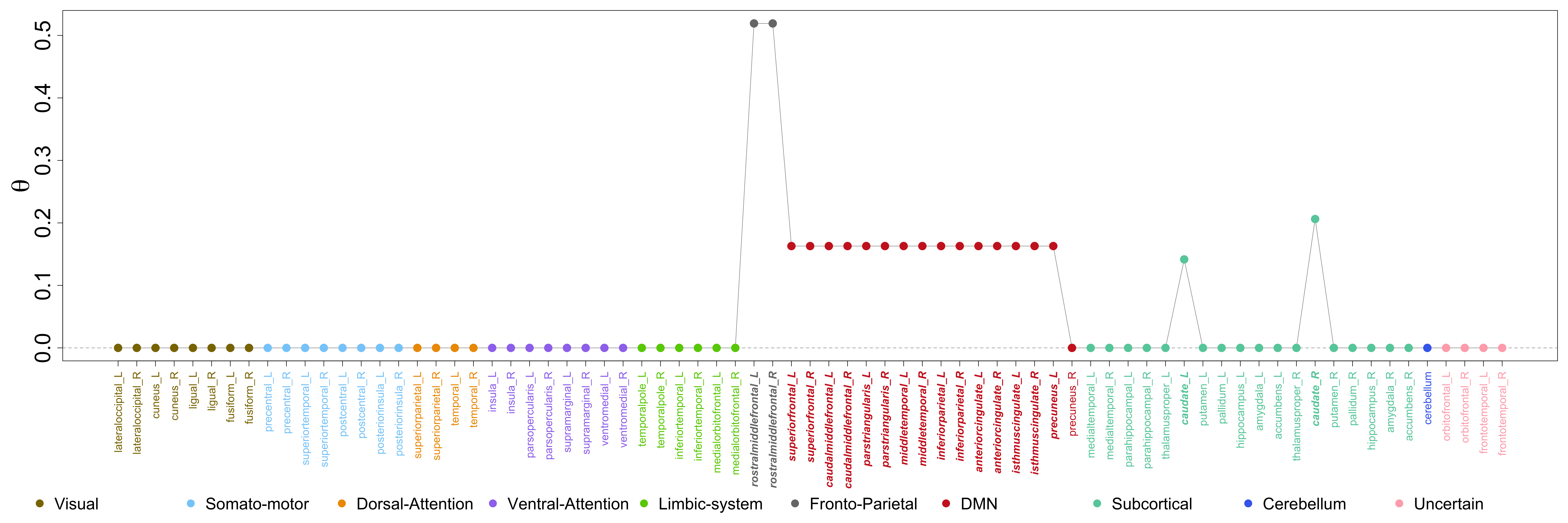}}

        \subfloat[C2-tb loading ($\bgamma$)]{\includegraphics[width=0.5\textwidth]{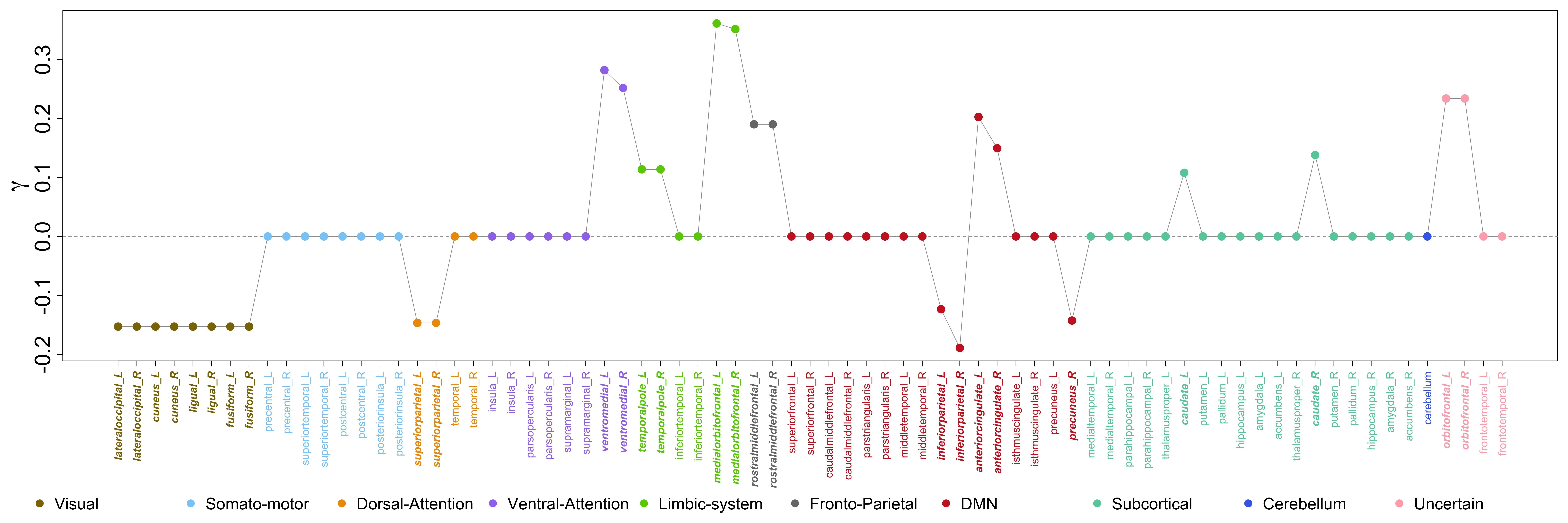}}
        \subfloat[C2-rs loading ($\btheta$)]{\includegraphics[width=0.5\textwidth]{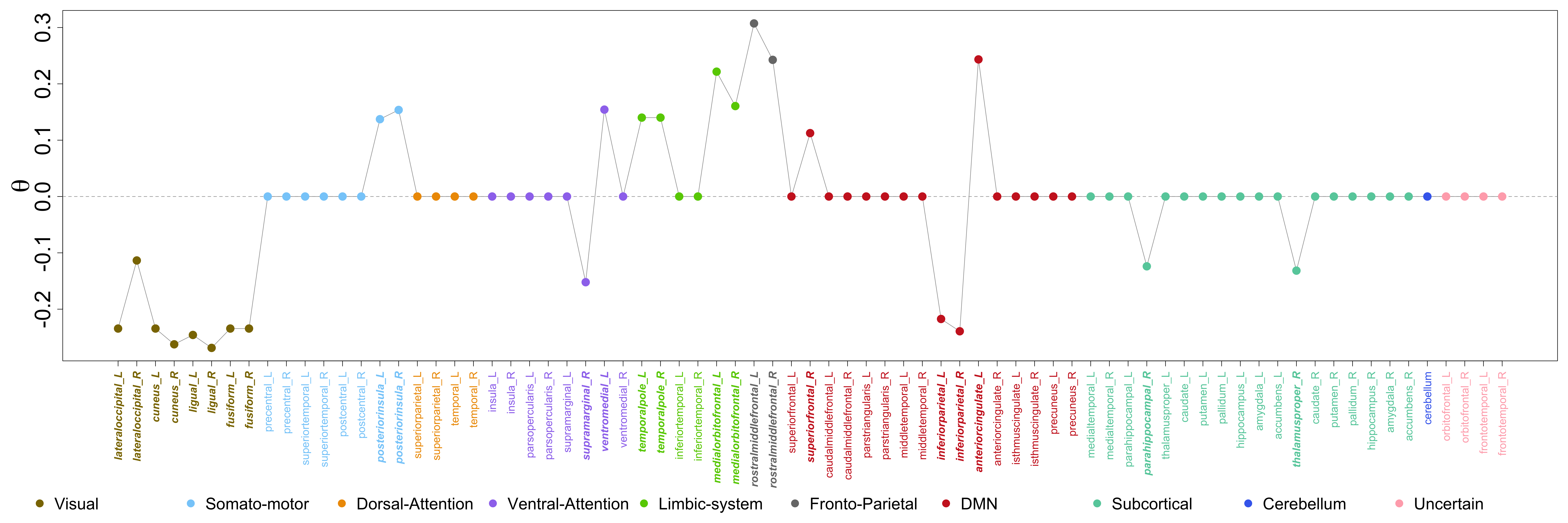}}

        \subfloat[C3-tb loading ($\bgamma$)]{\includegraphics[width=0.5\textwidth]{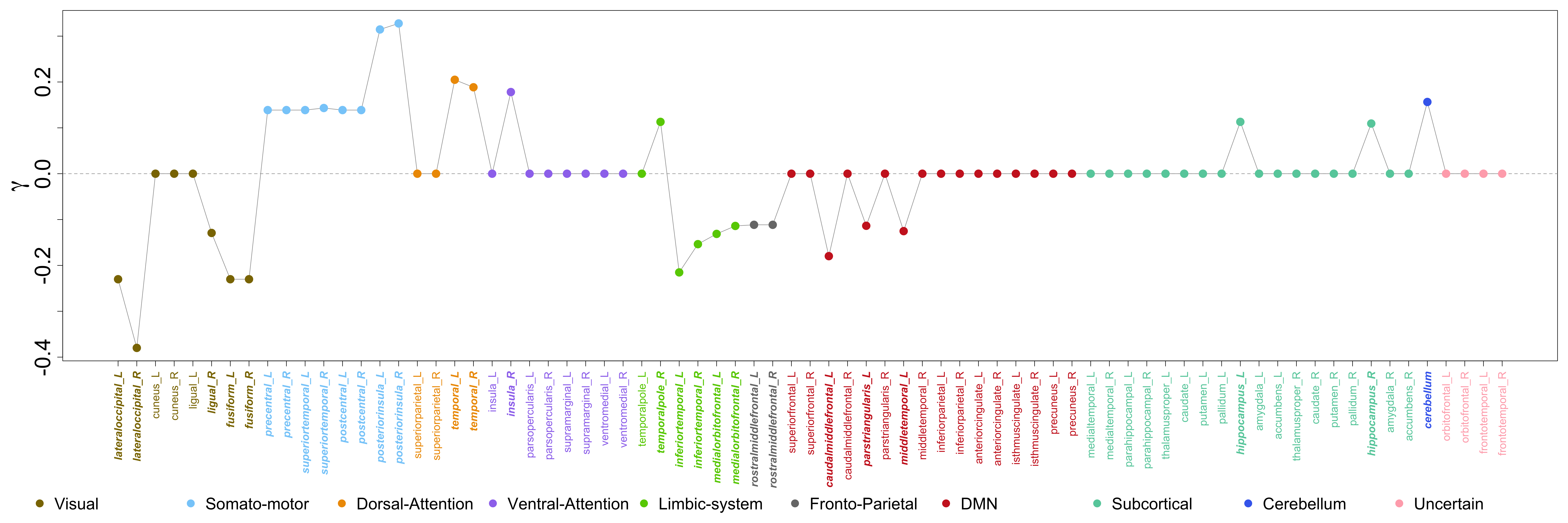}}
        \subfloat[C3-rs loading ($\btheta$)]{\includegraphics[width=0.5\textwidth]{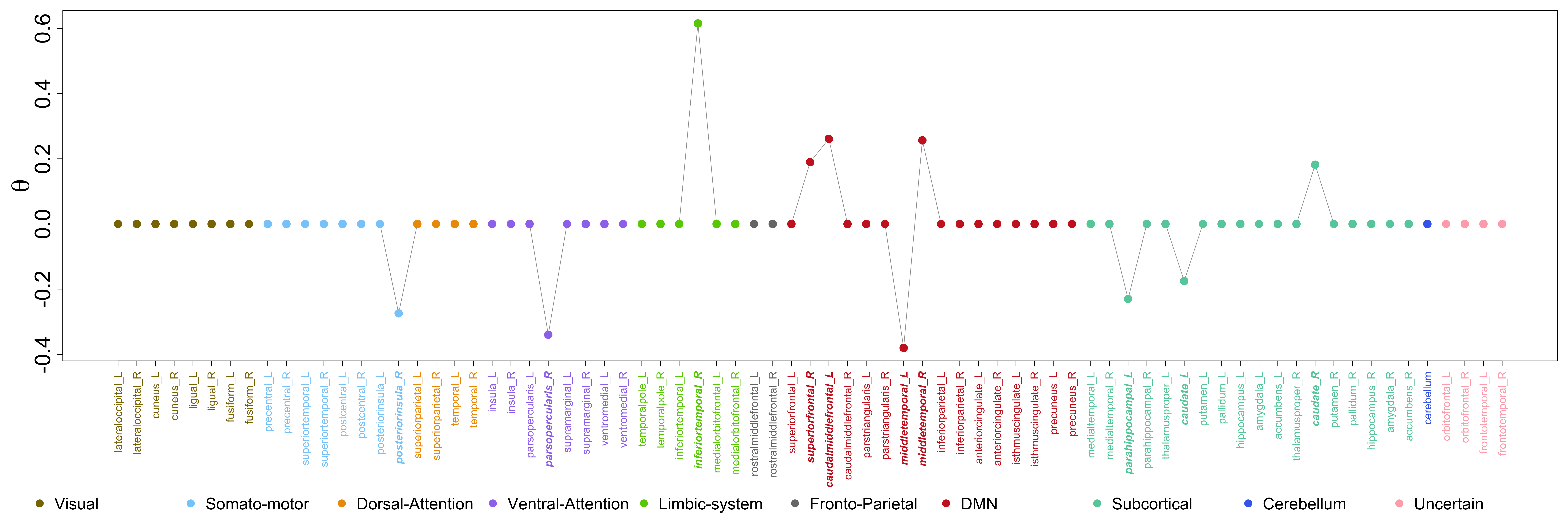}}
    \end{center}
    \caption{\label{appendix:fig:hcp}Sparsified loading profile of the identified components (C1, C2, and C3), using the proposed CoCReg approach in the HCP Aging study. tb: results of task-based fMRI; rs: results of resting-state fMRI.}
\end{figure}
\begin{figure}
    \begin{center}
        \subfloat[C1]{\includegraphics[width=0.33\textwidth]{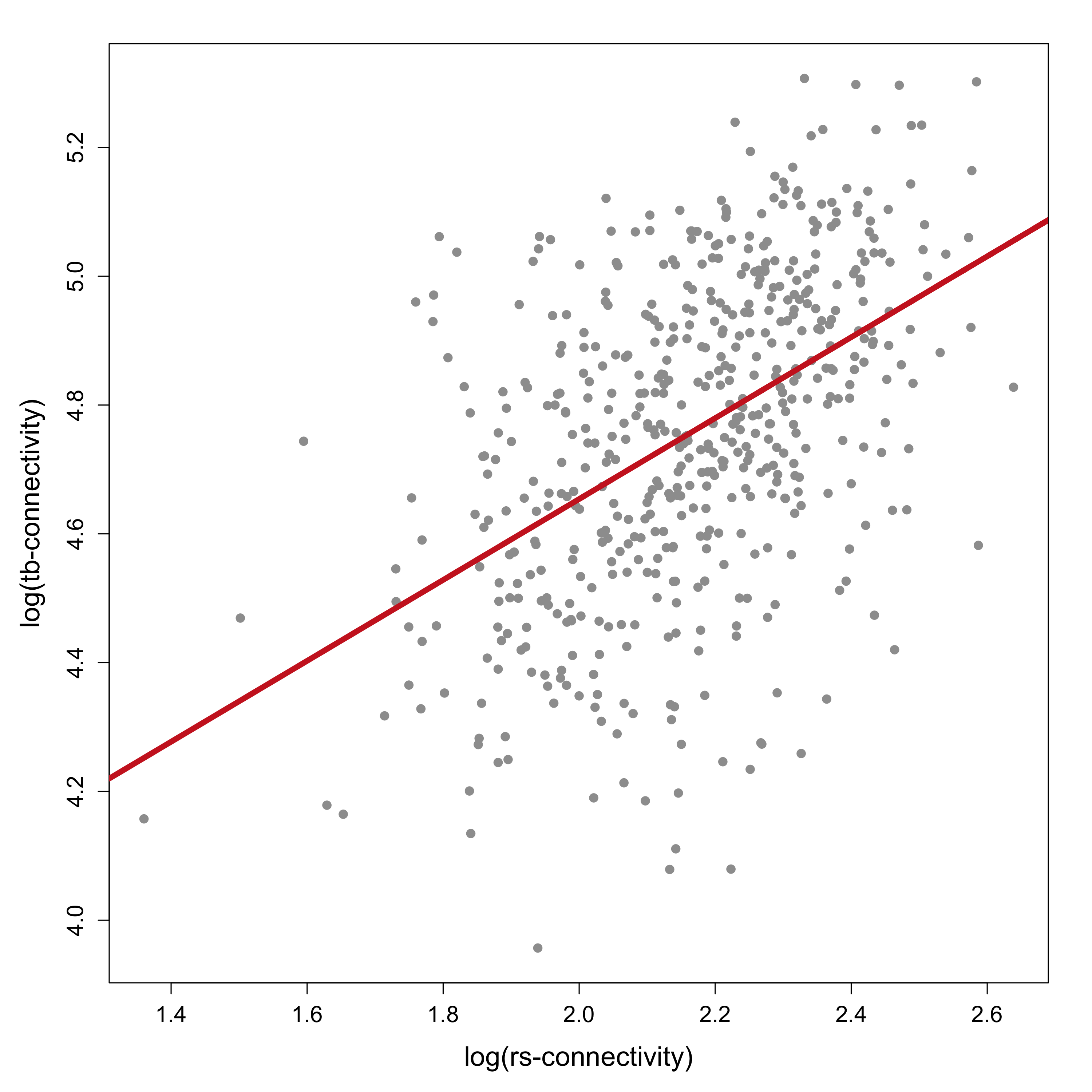}}
        \subfloat[C2]{\includegraphics[width=0.33\textwidth]{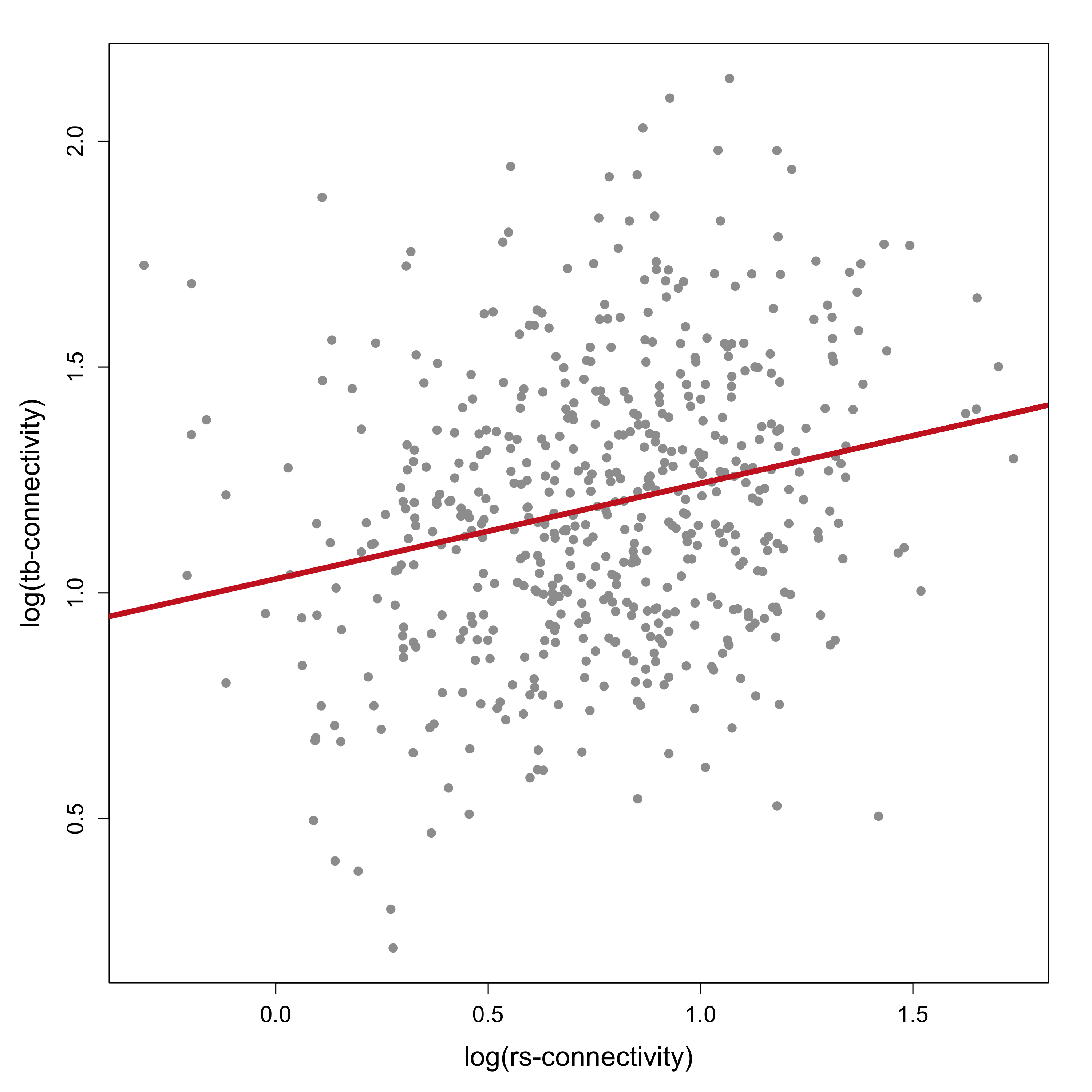}}
        \subfloat[C3]{\includegraphics[width=0.33\textwidth]{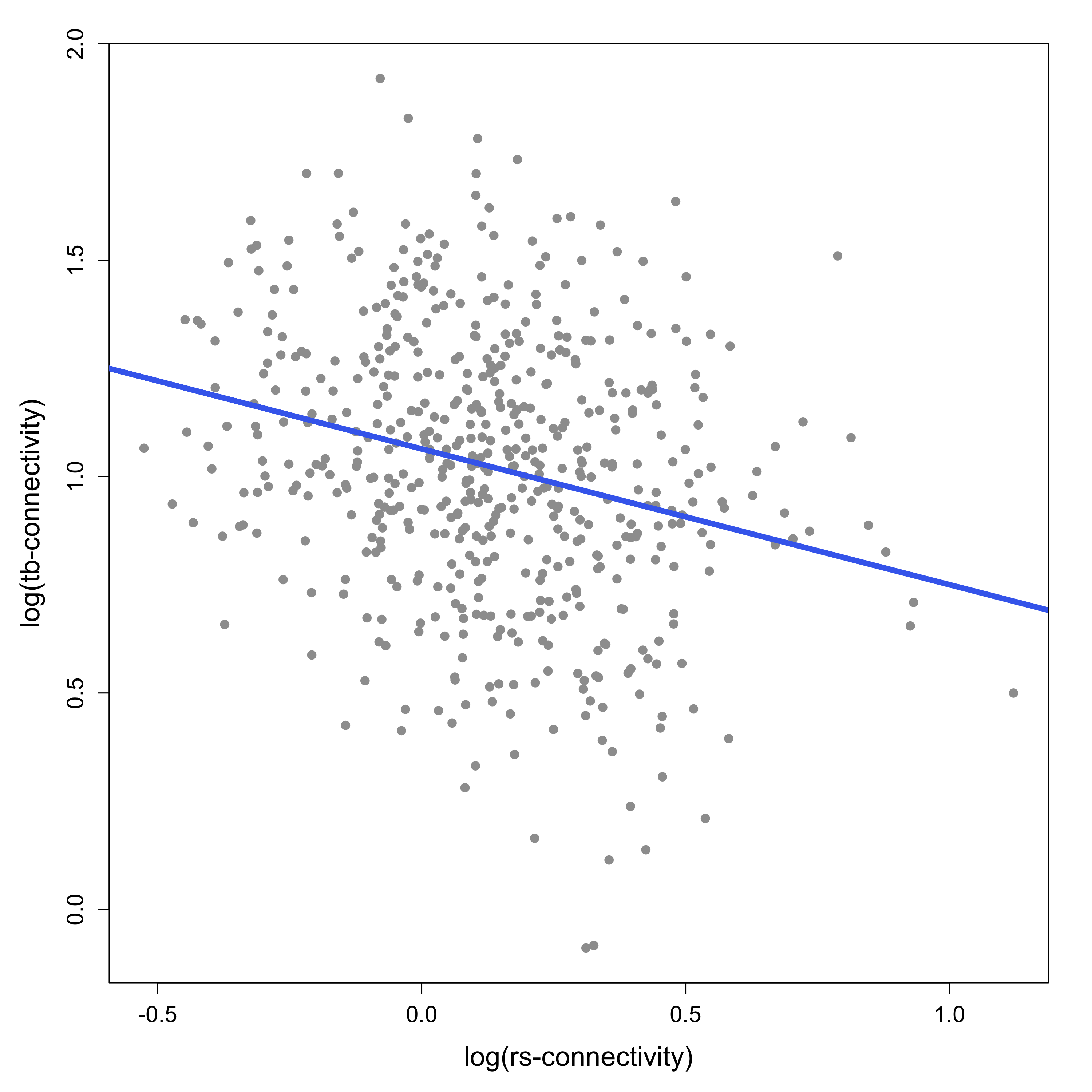}}
    \end{center}
    \caption{\label{appendix:fig:hcp_score}Scatter plot of $\log(\hat{\bgamma}^\top\hat{\bSigma}_{i}\bgamma)$ versus $\log(\hat{\btheta}^\top\hat{\bDelta}_{i}\btheta)$, as well as the fitted regression line, of the identified components (C1, C2, and C3), using the proposed CoCReg approach in the HPC Aging study. The value of $\log(\hat{\bgamma}^\top\hat{\bSigma}_{i}\bgamma)$ is adjusted for age and sex. tb: results of task-based fMRI; rs: results of resting-state fMRI.}
\end{figure}


\bibliographystyle{apalike}
\bibliography{Bibliography}

\end{document}